\newcommand{\be}{\begin{equation}}
\newcommand{\ee}{\end{equation}}
\newcommand{\bqs}{\begin{equation*}}
\newcommand{\eqs}{\end{equation*}}
\renewcommand{\O}{\mathcal{O}}
\newcommand{\e}{\epsilon}
\numberwithin{equation}{section}
\theoremstyle{plain}
\newtheorem{theorem}{Theorem}[section]
\newtheorem{rmk}[theorem]{Remark}
\newcommand{\me}{\mathrm{e}}
\newcommand{\sv}{\mathbf{s}}
\newcommand{\jv}{\mathbf{j}}
\newcommand{\ev}{\mathbf{e}}
\newcommand{\vh}{\mathbf{h}}
\title{Epidemic spreading on complex networks as front propagation into an unstable state  }
\author[1]{Ashley Armbruster}
\author[2,3]{Matt Holzer}
\author[4]{Noah Roselli}
\author[5]{Lena Underwood}
\affil[1]{\small Frostburg State University, Frostburg, MD, USA}
\affil[2]{\small Department of Mathematical Sciences, George Mason University, Fairfax, VA, USA }
\affil[3]{\small Center for Mathematics and Artificial Intelligence (CMAI), George Mason University, Fairfax, VA, USA}
\affil[4]{\small New Jersey Institute of Technology, Newark, NJ, USA}
\affil[5]{\small Macalester College, St. Paul, MN, USA}
\begin{document}
\maketitle

\begin{abstract} We study epidemic arrival times in meta-population disease models through the lens of front propagation into unstable states.  We demonstrate that several features of invasion fronts in the PDE context are also relevant to the network case.  We show that the susceptible-infected-recovered model on a network is linearly determined in the sense that the arrival times in the nonlinear system are approximated by the arrival times of the instability in the system linearized near the disease free state.  Arrival time predictions are extended to general compartmental models with a susceptible-exposed-infected-recovered model as the primary example.   We then study a recent model of social epidemics where higher order interactions lead to faster invasion speeds.  For these pushed fronts we compute corrections to the estimated arrival time in this case.  Finally, we show how inhomogeneities in local infection rates lead to faster average arrival times.

\end{abstract}

{\noindent \bf Keywords:} epidemic arrival times, meta-population model, invasion fronts \\


\section{Introduction }

The study of global disease spread across complex networks has been the focus of a great deal of research over the past several decades; see \cite{barrat08,brockmann13,colizza06,kiss17,pastor15,taylor15} for a survey of many of the models and methods employed.  Meta-population models comprise one sub-class of models
 where the disease dynamics at each locality are assumed to obey some compartmental model (SIR for example) and movement of individuals between localities is modeled by diffusion on a complex network; see for example  \cite{brockmann13,rvachev}.  This leads to a high dimensional system of ODEs of reaction-diffusion type.  Among the questions that one is interested in are arrival times: given that disease originates in one city how long does it take to appear in some other city?  For reaction-diffusion PDEs,  instability spreading involves the formation of traveling fronts and arrival times are inversely proportional to the speed selected by these fronts; see for example \cite{bramson83,ebert00}.  It is a powerful, albeit perhaps peculiar, fact that often times the speed of the front in the nonlinear PDE is the same as the spreading speed of localized initial conditions in the PDE linearized about the unstable state; see \cite{aronson78,vansaarloos03}.  This fact was exploited in \cite{chen18} to derive arrival times estimates based upon linearization near the unstable, disease free state.  The purpose of the current study is to exploit this analogy further and demonstrate that several features of front propagation into unstable states for PDEs are also relevant to meta-population epidemics spreading on complex networks. 
 
The meta-population model that we first consider is the following one;  see \cite{brockmann13,rvachev},
\begin{eqnarray}
\partial_{t}s_{n}&=&-\alpha s_{n}j_{n}+ \gamma\sum_{m\neq n}P_{nm}(s_{m}-s_{n}) \nonumber \\
\partial_{t}j_{n}&=&\alpha s_{n}j_{n}-\beta j_{n}+ \gamma\sum_{m\neq n}P_{nm}(j_{m}-j_{n}) \nonumber \\
\partial_{t}r_{n}&=&\beta j_{n}+ \gamma\sum_{m\neq n}P_{nm}(r_{m}-r_{n}) 
.\label{eq:main} \end{eqnarray}

Here $s_n$, $j_n$ and $r_n$ denote the susceptible, infected and recovered proportion of the population residing at node (city) $n$.  The dynamics of these variables is assumed, for the moment,  to obey a standard SIR model at each node with infection rate $\alpha$ and recovery rate $\beta$.  The nodes are connected by edges described by the row stochastic adjacency matrix $\mathrm{P}$.  Following \cite{brockmann13} we think of these edges are describing airline transportation routes connecting cities with the values in the matrix representing a normalized magnitude of passenger transport along each edge.  {\color{black} The matrix $\mathrm{P}$ is assumed to be row stochastic so that the total population at each node is constant in time and it is only the proportion of the susceptible, infected and recovered population that varies.}  The parameter $\gamma$ is the diffusion constant and we, crucially, will assume that it is small (see again \cite{brockmann13} for estimates of $\gamma$ for the global airline network).  We note that the number of nodes in typical realizations of (\ref{eq:main}) is large (on the order of thousands for the airline transportation network) and the corresponding parameter space is also high dimensional due to the large number of non-zero entries in $\mathrm{P}$.

Brockmann and Helbing \cite{brockmann13} studied (\ref{eq:main}) with the goal of identifying the role of geographically non-local transportation routes in the global spread of epidemics.  Focusing on airline transportation networks, their influential idea was to consider the epidemic process as a front propagation with respect to some {\em effective distance}, $D_{\mathrm{eff}}(\mathrm{P})$,  that can be derived from the connectivity matrix $\mathrm{P}$.  They then predict the arrival time of the disease at a city as the ratio $T_a=\frac{D_{\mathrm{eff}}(\mathrm{P})}{v_{\mathrm{eff}(\alpha,\beta,\kappa)}}$\footnote{$D_{\mathrm{eff}}(\mathrm{P})$ as defined in \cite{brockmann13} is defined by first computing an effective distances between connected nodes defined as $1-\log(\mathrm{P}_{mn})$.  Then, for any two notes that are not connecting the effective distance is defined as the minimal sum of effective distances along all paths connecting the two nodes.} so that arrival times are linearly related to the effective distance. Here $v_{\mathrm{eff}}(\alpha,\beta,\gamma,\kappa)$ is the effective velocity which is assumed to be a function of the dynamical parameters in the model and $\kappa$, an invasion threshold.    Key to this idea is the fact that the effective distance depends only on the structure of $\mathrm{P}$.  As such, the distance prediction is agnostic in regards to the particular disease model considered.  Estimates for the coefficients in the real  world $\mathrm{P}$ are obtained in \cite{brockmann13} and comparisons with data of observed arrival times in historical epidemics are considered which reveal a general linear trend between arrival times and these effective distances.  One drawback of the effective distance computed in \cite{brockmann13} is that it assumes there is a single dominant pathway of infection between the origin city and any other city in the graph.  Modifications of this effective distance to account for multiple pathways of infection are presented in \cite{iannelli17}.  In addition to \cite{brockmann13}, a number of other authors have considered the dynamics of global disease spread through the lens of front propagation; see for example \cite{belik11,besse21,chen18,gautreau07,gautreau08,hindes13,hoffman19,hufnagel04}.

A remarkable feature of fronts propagating into unstable states in the PDE context\footnote{In fact, this phenomena occurs more generally for spatially extended systems such as lattice dynamical systems or systems with non-local diffusion in both discrete and continuous time; see \cite{hindes13,weinberger02} among others for examples.} 
is that their speed often equals the spreading speed of localized disturbances in the system linearized about the unstable state.   {\color{black} This phenomena often occurs in systems where the nonlinearity suppresses growth, as is the case in (\ref{eq:main}).}  Such fronts are referred to as {\em pulled}; see \cite{vansaarloos03} or linearly determinate; see \cite{weinberger02} as they are driven by the instability ahead of the front interface and their speed is determined from the linearization near the unstable state.   This is a powerful tool as it allows for the computation of a quantity of interest in a high (or infinite) dimensional nonlinear system via a linear equation.  This forms the basis of the approach in \cite{chen18} where arrival time estimates are derived for (\ref{eq:main}) by computing arrival times in the system linearized near the disease free state.  The goal of the present study is to exploit this analogy between the dynamics of  reaction-diffusion equations like (\ref{eq:main}) and their PDE counterparts to make qualitative predictions regarding the effects of arrival times where various modifications of (\ref{eq:main}) are made.  {\color{black} Our main results are qualitative in nature and can be summarized as follows:
\begin{itemize}
\item For systems with local dynamics described by SIR or SEIR we derive explicit arrival times estimates based upon linearization near the unstable state that reveal, in the limit $\gamma\to 0$, how arrival times depend on model parameters such as local infection rates, local recovery rates and mobility network weights.  At leading order in $\gamma$, the effective distance between nodes is shown to be the graph distance $d$ while the effective velocity is proportional to $-\frac{1}{\log(\gamma)}$.   Network properties influence arrival times at $\O(1)$ in $\gamma$ where the key quantity is the random walk probability of traversing between the two cities in the minimal number of steps.  
\item We show, by way of an example, that linear arrival times are not good estimates for all systems.  This example occurs for a model for which the nonlinearity enhances growth of the local infection and we explain the mechanism by which this leads to faster arrival times drawing an analogy to pushed fronts in spatially extended systems.  Based upon an analysis of the local dynamics, we derive arrival time estimates and compare them with numerical simulations. 
\item We show that inhomogeneities in local reaction rates lead to faster arrival times on average.   We attribute this to the following mechanism based upon the linear arrival times estimates for the homogeneous SIR model: increasing infection rates leads to a decrease in arrival times at  $\O\left(-\log(\gamma)\right)$ while decreasing the random walk probability between two nodes decreases arrival times at $\O(1)$.  Thus, if two cities are connected by at least one shortest path consisting of cities with higher than average infection rates we expect an overall decrease in the arrival time.  
\end{itemize} 

}


It bears mentioning that if one had reliable estimates for the parameters in (\ref{eq:main}) -- the infection rate $\alpha$, the recovery rate $\beta$ and the coefficients of the mobility matrix $\mathrm{P}$ then to estimate arrival times one could simply numerically solve the system of ODEs in (\ref{eq:main}).  In fact, this would serve as a forecast for the entire course of the epidemic.  More broadly, there are a number of sophisticated tools for the forecasting of epidemics; see for example the GLEAM simulator \cite{balcan10,vandenbroeck11}.  In this light, our goal in this work is not epidemic forecasting but instead is to present qualitative predictions for how arrival times depend on system features and to strengthen the relationship between the dynamics of (\ref{eq:main}) and the theory of invasion fronts in PDEs or other spatially extended systems which will, in turn, help inform researchers making epidemic forecasting.  {\color{black} Qualitative statements are particularly useful for systems with a high dimensional parameter spaces, as is the case with (\ref{eq:main}). }

We discuss some limitations of the present study.  Most arrival time estimates that we provide are obtained in the limit of small $\gamma$.  In particular, our explicit arrival time estimates will require $\gamma$ to be asymptotically smaller than various quantities including the instability parameter $\alpha-\beta$ and the coefficients of the mobility matrix $\mathrm{P}$.    While $\gamma$ is naturally expected to be small, once again see \cite{brockmann13}, it is not expected that these conditions will hold generally for real world transportation networks.  While some of these deficiencies could be likely remedied by a more detailed analysis we do not pursue such estimates here. Another interesting avenue for research is to study how well the arrival times estimates for the deterministic model (\ref{eq:main}) reflect those in stochastic versions of epidemic spread; we point the reader to \cite{jamieson20,jamieson21} for recent work in this direction.  

For the purposes of illustrating our main results we will perform numerical simulations of (\ref{eq:main}) on a version of the world wide airline transportation network obtained from \cite{openflights}.  This is a historical snapshot from June 2014.  There are $N=3304$ airports and the network has $19,082$ edges representing one or more flights connecting two cities.  The mean degree is $11.53$.  We will use this network to illustrate some of our results and arrival time estimates, but we do not pursue a full numerical investigation.  {\color{black} For the purposes of numerical simulations, we do not attempt to construct accurate approximations for the flux matrix $\mathrm{P}$ as was done in \cite{brockmann13}.   Let $\mathrm{A}$ be the symmetric adjacency matrix for the airline transportation network from \cite{openflights} where the entry $\mathrm{A}_{nm}$ equals  $1$ if there exists a flight connecting cities $n$ and $m$ and $0$ otherwise.  Let $\mathrm{D}$ be the diagonal degree matrix.  Then we will take $\mathrm{P}=\mathrm{D}^{-1}\mathrm{A}$ for simplicity.   For future reference, we define the graph distance $d_{mn}$ as the minimum length path between the node $n$ and $m$.  When the origin node $n$ is fixed we will shorten this to $d_m$.  }

The rest of the paper is organized as follows.  In Section~\ref{sec:reviewAT}, we review and motivate the arrival time estimate of \cite{chen18}.  In Section~\ref{sec:SEIR}, we extend this arrival times estimate to a susceptible-exposed-infected-recovered (SEIR) model.  In Section~\ref{sec:pushed}, we show that the linear arrival time estimate is no longer valid in a model of social epidemics that incorporates higher order interactions between individuals but are able to make corrections to the arrival time estimate to yield approximations.  In Section~\ref{sec:inhomo}, we study the effect of inhomogeneous infection rates on arrival times and argue that this will decrease arrival times on average.  

\section{Arrival time estimates via linearization near the disease free state} \label{sec:reviewAT}
In this section, we review the arrival time estimate presented in \cite{chen18}.  {\color{black} We assume that the disease originates in city $n$ with the initial infected proportion $j_n(0)=\chi_0$ so that $s_n(0)=1-\chi_0$.  We are interested in nonlinear arrival times $t_{mn}$ defined as the  minimal time at which $j_m(t)$ exceeds some threshold $\kappa$.  The primary purpose of this section is to review how estimates for $t_{mn}$ can be obtained by linearizing near the unstable, disease-free state. } 

The arrival time estimate in \cite{chen18} is predicated on the fact that (\ref{eq:main}) is linearly determined; see \cite{weinberger02}, which informally means that the linear arrival times will be a good prediction for the nonlinear arrival times.  Thus, the first step is to linearize (\ref{eq:main}) near the unstable state (we will neglect the recovered population from here forward) to obtain the following system of linear equations expressed in vector form,
\begin{eqnarray*} \sv_t &=& -\alpha \jv +\gamma \left( \mathrm{P}-\mathrm{I}\right) \sv  \\
\jv_t &=&  (\alpha-\beta) \jv +\gamma \left( \mathrm{P}-\mathrm{I}\right) \jv .
\end{eqnarray*}

The $\jv$ component decouples and can be solved using the matrix exponential,
\be \jv(t)=\chi_0 \me^{(\alpha-\beta-\gamma)t}\me^{\gamma \mathrm{P} t} \boldsymbol{\delta}_n, \label{eq:linsol} \ee
{\color{black} where $\boldsymbol{\delta}_n$ is the standard Euclidean basis vector and $\chi_0$ is the initial infected proportion residing in city $n$.  The arrival time in city $m$ is defined as the first time where the infected proportion of the population exceeds a threshold $\kappa$ and is therefore the smallest positive solution of 
\be j_m(t_{mn})=\kappa. \label{eq:ATdefn} \ee
Let $\tau_{mn}$ be an estimate for $t_{mn}$ obtained by setting the $m$-th component of (\ref{eq:linsol}) equal to $\kappa$.  To obtain this estimate,  project the solution in (\ref{eq:linsol}) onto $\boldsymbol{\delta}_m$ to extract the infected proportion at the $m$-th node.  Then we wish to solve 
\be \kappa=\boldsymbol{\delta}_m^T \chi_0 \me^{(\alpha-\beta-\gamma)\tau_{mn}}\me^{\gamma \mathrm{P} \tau_{mn}} \boldsymbol{\delta}_n. \label{eq:linAT} \ee}
To exploit the smallness of the parameter $\gamma$, the matrix exponential is expanded as a series,
\[ \boldsymbol{\delta}_m^{T}\me^{\gamma \mathrm{P} \tau_{mn}} \boldsymbol{\delta}_n=\sum_{k=0}^\infty \gamma^k\frac{\tau_{mn}^k}{k!} \boldsymbol{\delta}_m^T \mathrm{P}^k  \boldsymbol{\delta}_n. \]
The coefficients $\boldsymbol{\delta}_m^T \mathrm{P}^k \boldsymbol{\delta}_n$ are random walk probabilities for a walker traveling from city $m$ to city $n$ in $k$ steps.  As such, all these terms are zero up to $k=d_m$ where we recall that $d_m$ is the graph distance between the origin city $n$ and the arrival city $m$.  Let $\rho_m=\boldsymbol{\delta}_m^T P^{d_m} \boldsymbol{\delta}_n$.  Now, for $\gamma $ sufficiently small we assume that the leading order term in the sum dominates and we obtain a leading order expression for $\tau_{mn}$ by solving 
\be \kappa=\frac{\chi_0\rho_m}{d_m!} \gamma^{d_m} \tau_{mn}^{d_m} \me^{(\alpha-\beta-\gamma)\tau_{mn}}.  \label{eq:themainlin} \ee
The solution of this equation can be expressed in terms of the Lambert-W function, and we obtain the arrival time estimate
\be \tau_{mn}= \frac{d_m}{\alpha-\beta}W\left( \frac{1}{\gamma}\frac{\alpha-\beta}{d_m} \left( \frac{d_m! \kappa}{\rho_m \chi_0} \right)^{1/d_m}\right). \label{eq:linATW} \ee
Expanding the Lambert-W function we obtain  
\be \tau_{mn} = -\frac{d_m}{\alpha-\beta}\log(\gamma)-\frac{d_m}{\alpha-\beta}\log (-\log(\gamma))-\frac{d_m}{\alpha-\beta}\log\left(\frac{d_m}{\alpha-\beta}\left(\frac{\rho_m \chi_0}{d_m! \kappa}\right)^{1/d_m}\right) +o(1), \label{eq:linATex} \ee
{\color{black} where $o(1)$ represent terms that go to zero as $\gamma\to 0$; see again \cite{chen18}.  Note that in the $\O(1)$ terms we write the argument of the logarithm so that it is clear that larger values of the random walk probability $\rho_m$ lead to faster arrival times.  }

{\color{black}  The primary take-away from (\ref{eq:linATex}) is that only two network features are relevant for the determination of arrival times (in the limit as $\gamma\to 0$) and are i) the graph distance between the origin and arrival cities and ii) the random walk probability of traversing between these two cities in the minimal number of steps.  We remark further that, to leading order, the effective distance is simply the graph distance between the nodes $n$ and $m$ while the effective velocity is $\frac{\alpha-\beta}{-\log(\gamma) }$.  This is consistent with the spreading speed of instabilities along one dimensional lattices; see for example \cite{hoffman19}. We also note that if $\O(1)$ terms are involved then it is no longer possible to separate the arrival times into a ratio of a network-dependent effective distance and a dynamics-dependent effective velocity.  Once again, we emphasize that these are asymptotic estimates and should be expected to hold in limit as $\gamma\to 0$.  For larger values of $\gamma$, we are not able to explicitly connect network properties to arrival times, although  we emphasize that numerical results suggest that linear arrival times remain good estimates for nonlinear arrival times in this case; see Section~\ref{sec:discussion}.}

As we have stressed above, the fact that these arrival time estimates are accurate stems from the fact that (\ref{eq:main}) is linearly determined.  In fact, we have the following result which proves that the linear arrival times are always a lower bound for the nonlinear arrival times.
\begin{theorem}\label{thm:super} Consider (\ref{eq:main}) with the initial conditions $s_l(0)=1$, $j_l(0)=r_l(0)=0$ for all $l\neq n$ and $s_n(0)=1-\chi_0$, $j_n(0)=\chi_0$ and $r_n(0)=0$ for some $0<\chi_0<1$.   Let $\tau_{mn}(\alpha,\beta,\gamma,\kappa,\chi_0)$ be the linearized arrival time estimate in city $m$ defined as the solution of (\ref{eq:linAT}).  Let $t_{mn}(\alpha,\beta,\gamma,\kappa,\chi_0)$ be the nonlinear arrival time of the disease at node $m$, defined by the minimum time at which 
\[ j_m(t_{mn})=\kappa. \]
Then 
\[ \tau_{mn}(\alpha,\beta,\gamma,\kappa,\chi_0)< t_{mn}(\alpha,\beta,\gamma,\kappa,\chi_0). \]
\end{theorem}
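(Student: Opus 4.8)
The plan is to realize the nonlinear infected variables as a subsolution of the linear $\jv$-equation that defines $T_{mn}$, deduce pointwise domination of the nonlinear profile by the explicit linear solution \eqref{eq:linsol}, and then read off the arrival-time inequality. Write $\tilde{\sv}(t)=(s_l(t))_l$, $\tilde{\jv}(t)=(j_l(t))_l$ for the nonlinear solution of \eqref{eq:main} with the prescribed data, and $\jv(t)=j_0\me^{(\alpha-\beta-\gamma)t}\me^{\gamma\mathrm{P}t}e_n$ for the linear solution \eqref{eq:linsol}, which starts from the same infected data $j_0 e_n$. We may take $m\neq n$, the case $m=n$ being either trivial or vacuous according to whether $j_0<\kappa$.

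First I would record the a priori bounds $j_l(t)\ge 0$ and $0\le s_l(t)\le 1$ for all $l$ and all $t\ge 0$. Nonnegativity of $\tilde{\sv}$ and $\tilde{\jv}$ follows from forward invariance of the positive orthant for \eqref{eq:main} (on the face $\{s_l=0\}$ one has $\partial_t s_l=\gamma\sum_{m}P_{lm}s_m\ge 0$, and likewise on $\{j_l=0\}$), while $s_l\le 1$ follows by setting $\mathbf{u}=\mathbf{1}-\tilde{\sv}$ (with $\mathbf{1}$ the all-ones vector), checking that $\partial_t\mathbf{u}=\alpha\,\tilde{\sv}\cdot\tilde{\jv}+\gamma(\mathrm{P}-\mathrm{I})\mathbf{u}$ has a nonnegative source ($\cdot$ denoting the entrywise product), and using that $\me^{\gamma(\mathrm{P}-\mathrm{I})t}$ is entrywise nonnegative because $\gamma(\mathrm{P}-\mathrm{I})$ has nonnegative off-diagonal entries. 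These estimates are the routine part of the proof.

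With the bounds in hand, the identity $\alpha s_l j_l-\beta j_l=(\alpha-\beta)j_l-\alpha(1-s_l)j_l$ together with $\alpha(1-s_l)j_l\ge 0$ shows that the nonlinear infected vector satisfies
\[ \partial_t\tilde{\jv}=\bigl[(\alpha-\beta)\mathrm{I}+\gamma(\mathrm{P}-\mathrm{I})\bigr]\tilde{\jv}-\alpha(\mathbf{1}-\tilde{\sv})\cdot\tilde{\jv}. \]
Subtracting this from the linear equation $\partial_t\jv=[(\alpha-\beta)\mathrm{I}+\gamma(\mathrm{P}-\mathrm{I})]\jv$ and applying Duhamel's formula, the difference $\mathbf{w}(t):=\jv(t)-\tilde{\jv}(t)$, which vanishes at $t=0$, is
\[ \mathbf{w}(t)=\int_0^t\me^{(t-\tau)[(\alpha-\beta)\mathrm{I}+\gamma(\mathrm{P}-\mathrm{I})]}\,\alpha\bigl(\mathbf{1}-\tilde{\sv}(\tau)\bigr)\cdot\tilde{\jv}(\tau)\,\rmd\tau. \]
Since $(\alpha-\beta)\mathrm{I}+\gamma(\mathrm{P}-\mathrm{I})=(\alpha-\beta-\gamma)\mathrm{I}+\gamma\mathrm{P}$ has nonnegative off-diagonal entries, its exponential is entrywise nonnegative and hence $\mathbf{w}(t)\ge 0$ componentwise, i.e. $j_m(t)\le[\jv(t)]_m$ for all $t\ge 0$. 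To make this strict at the crossing time I would use that the hypothesis $0<j_0<1$ forces $s_n(0)=1-j_0<1$, so the $n$-th component of the source equals $\alpha j_0^2>0$ at $\tau=0$ and therefore stays positive for $\tau$ near $0$; combined with connectivity of the network — which makes $\me^{t[(\alpha-\beta)\mathrm{I}+\gamma(\mathrm{P}-\mathrm{I})]}$ entrywise positive for every $t>0$ — this gives $w_m(t)>0$ for every node $m$ and every $t>0$. Finally, taking $T_{mn}$ to be the smallest solution of \eqref{eq:linAT}, which exists since $[\jv(t)]_m\to\infty$ in the regime of interest $\alpha>\beta$, we get $j_m(t)\le[\jv(t)]_m<\kappa$ on $[0,T_{mn})$ and $j_m(T_{mn})<[\jv(T_{mn})]_m=\kappa$, so the minimal time $\tau_{mn}$ at which $j_m=\kappa$ satisfies $\tau_{mn}>T_{mn}$.

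The step I expect to be the main obstacle is not the domination $j_m\le[\jv]_m$, which is a standard comparison argument for cooperative linear systems, but upgrading it to a \emph{strict} inequality exactly at $\tau_{mn}$, since this is what distinguishes the lower bound from an equality. This requires (i) a node and an instant at which the nonlinear equation is a strict subsolution — supplied by $j_0<1$, which makes $s_n(0)<1$ — and (ii) a guarantee that this strictness is transmitted to the target node $m$ by some positive time, which is where irreducibility of $\mathrm{P}$ and the entrywise positivity of $\me^{t[(\alpha-\beta)\mathrm{I}+\gamma(\mathrm{P}-\mathrm{I})]}$ for $t>0$ enter. The choice of which root of \eqref{eq:linAT} is meant by $T_{mn}$ is immaterial for the argument: it only uses $[\jv(t)]_m<\kappa$ for $t<T_{mn}$, which holds for the smallest root, and when $\gamma\ll\alpha-\beta$ the factor $\me^{(\alpha-\beta-\gamma)t}$ is increasing while $[\me^{\gamma\mathrm{P}t}e_n]_m$ is nondecreasing, so the root is in fact unique.
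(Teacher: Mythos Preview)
Your proof is correct and follows essentially the same comparison-principle route as the paper: show $s_l\le 1$, observe that this makes the linear $\jv$-equation a super-solution for the nonlinear infected component, and conclude $j_m(t)\le [\jv(t)]_m$. Your treatment is in fact more thorough than the paper's, which states the super-solution framework and stops; in particular, you address the upgrade from $\le$ to the strict inequality $T_{mn}<\tau_{mn}$ claimed in the statement (via positivity of the source at $\tau=0$ and irreducibility of $\mathrm{P}$), a point the paper's proof does not explicitly justify.
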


\begin{proof} The proof is a standard application of the comparison principle and was sketched in \cite{chen18}.  Let
\begin{eqnarray*}
N_S(\sv,\jv) &=& \sv_t+\alpha \sv \circ \jv -\gamma \left(\mathrm{P}-\mathrm{I}\right) \sv \\
N_J(\sv,\jv) &=& \jv_t-\alpha \sv \circ \jv+\beta \jv-\gamma \left(\mathrm{P}-\mathrm{I}\right)\jv.
\end{eqnarray*}
Here $\sv\circ\jv$ is the Hadamard, or component-wise multiplication of the vectors.  
The idea is to find functions $\bar{\sv}(t)$ and $\bar{\jv}(t)$ such that  both $N_S(\bar{\sv}(t),\bar{\jv}(t))$ and $N_J(\bar{\sv}(t),\bar{\jv}(t))$ are non-negative indicating that the temporal growth rate of the selected functions exceeds that of the true solution and therefore any initial condition for which $\sv(0)\leq \bar{\sv}(0)$ and $\jv(0)\leq \bar{\jv}(0)$  will satisfy  $\sv(t)\leq \bar{\sv}(t)$ and $\jv(t)\leq \bar{\jv}(t)$ for all $t>0$.   To begin, it is easy to see that if $\bar{\sv}(t)=1$ then $N_S(1,\jv(t))>0$.  Then we observe
\[ N_J(1,\bar{\jv}(t)) = \bar{\jv}_t-(\alpha-\beta) \bar{\jv} -\gamma \left(\mathrm{P}-\mathrm{I}\right)\bar{\jv}    . \]
Thus, if  $\bar{\jv}(t)$ is the solution of the linear equation (\ref{eq:linsol}) we have obtained a  super-solution.  The result then follows.   
\end{proof}

To fully validate that the arrival times are linearly determined would require the establishment of sufficiently sharp sub-solutions. We do not pursue this avenue of research here; although we do point to \cite{fu16,wu17} for constructions in the case of (\ref{eq:main}) posed on an infinite lattice.

{\color{black} The linearly determined arrival time estimates are compared to arrival times observed in numerical simulations in Figure~\ref{fig:comp}.  We also point out that Theorem~\ref{thm:super} does not depend on $\gamma$ being small, see Figure~\ref{fig:biggamma} }

\begin{figure}
    \centering
     \subfigure{\includegraphics[width=0.41\textwidth]{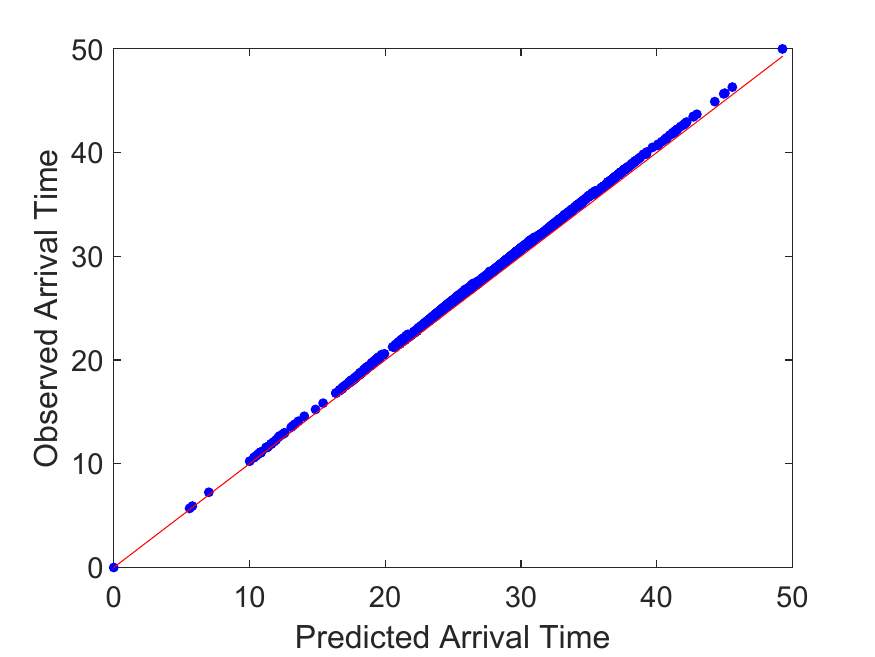}}
 \subfigure{\includegraphics[width=0.41\textwidth]{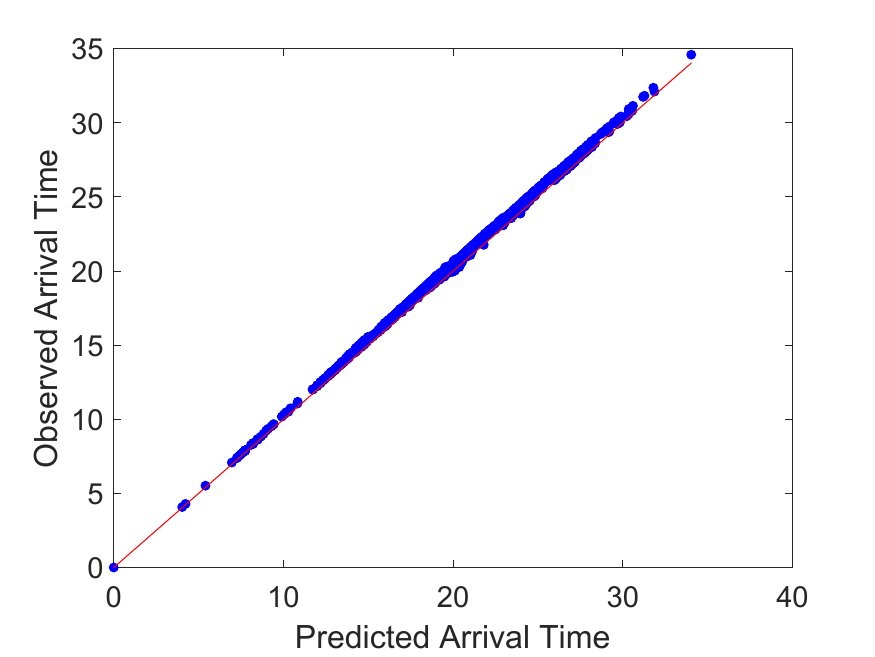}}
   \caption{Arrival time estimates given by (\ref{eq:linATW}) plotted against arrival times observed in numerical simulations of (\ref{eq:main}).  On the left, the mobility parameter $\gamma=0.001$ while on the right the mobility parameter is $\gamma=0.01$.  In both cases, the infection rate is $\alpha=1.50$ and the recovery parameter is $\beta=0.25$.   In both cases, the linear estimate is a good approximation of the nonlinear arrival times (for $\gamma=0.01$ absolute error less than $0.75$ days and relative error less than $0.04$).  We note that when $\gamma=0.001$ (left) then the observed arrival time is always greater than the predicted arrival time as is expected.  However, when $\gamma=0.01$ (right) a few cities have observed arrival times that are faster than the linear arrival time.  This does not contradict Theorem~\ref{thm:super} since only one term in the matrix exponential is used to create the linear arrival time estimate in (\ref{eq:linATW}).  If one were to include more terms in the sum, then the linear estimate would again be less than the observed nonlinear arrival time; see \cite{chen18} and Figure~\ref{fig:biggamma}.    }
    \label{fig:comp}
\end{figure}

\subsection{Alternate derivation of linear arrival time estimate} \label{sec:cascade} 
The analysis above suggests that, in the small diffusion limit, epidemic spreading in complex networks can be thought of as a cascading behavior where the epidemic spreads from the node of origination out through the network with all nodes of fixed graph distance from the origin node becoming infected at approximately the same time.    In this section, we explain how this point of view can be used to obtain  analogous arrival time estimates as in (\ref{eq:linAT}).  

{\color{black} The purpose of this section is twofold.  First and most importantly, this approach and the ideas presented here will be employed later for cases where linear determinacy fails (namely Section~\ref{sec:pushed} where faster than linear invasion speeds are observed and then in Section~\ref{sec:inhomo} where inhomogeneous infection rates lead to faster than average arrival times).  As a secondary goal, this section presents an alternate way to understand why the linear arrival time estimates derived previously are good estimates of the nonlinear arrival times.  This alternate method is more combersome than expanding the matrix exponential and relies on a number of formal calculations and therefore we do not suggest that this derivation should supplant the estimate derived by solving (\ref{eq:themainlin}).  }

To begin, without loss of generality we may assume that node $n=1$ is the node (city) at which the disease emerges.  Let $n=2$ correspond a city that is connected to the first node.  The equation for the infected population at this node is then
\[ \frac{dj_2}{dt}= \alpha s_2 j_2 -\beta j_2 +\gamma\sum_{k=1}^N \mathrm{P}_{2k} (j_k-j_2). \]
Assume that $j_2\ll 1 $, $s_2\approx 1$ and  that $j_k\ll  1$ for all $k\geq 3$.  From this it is reasonable to approximate $j_2(t)$ in the short to intermediate time by the linear equation
\be \frac{dj_2}{dt}\approx (\alpha-\beta) j_2 +\gamma \mathrm{P}_{21} j_1(t). \label{eq:j2approx} \ee 
This equation has an explicit solution 
\be j_2(t)\approx \gamma \mathrm{P}_{21} \me^{(\alpha-\beta) t}\int_0^t \me^{-(\alpha-\beta)\tau }j_1(\tau) d\tau \label{eq:j2sol} \ee 
Assuming further that $j_1(t)\approx \chi_0 \me^{(\alpha-\beta)t}$ then (\ref{eq:j2sol}) reduces to 
\[ j_2(t)\approx \gamma \mathrm{P}_{21}\chi_0 t \me^{(\alpha-\beta) t}, \]
from which we see that imposing $j_2(t_2)=\kappa$ and solving for the arrival time $t_2$ produces the same formula as in (\ref{eq:linAT}) {\color{black} and identical arrival time estimate $t_2\approx\frac{1}{\alpha-\beta} W\left(\frac{\kappa (\alpha-\beta)}{\chi_0 \gamma \mathrm{P}_{21}}\right) $.  }

Now suppose that $n=3$ is connected to $n=2$ but not connected to node $n=1$ nor any of its children (aside from node $2$).  Repeating the analysis above we can find 
\be \frac{dj_3}{dt}\approx (\alpha-\beta) j_3 +\gamma \mathrm{P}_{32} j_2(t). \label{eq:j3approx} \ee
{\color{black} We now plug in the approximation $j_2(t) \approx \gamma \mathrm{P}_{21} \chi_0 t \me^{(\alpha-\beta)t}$.  Then we obtain an approximate solution formula
\[ j_3(t)\approx \chi_0\gamma^2 \mathrm{P}_{32}\mathrm{P}_{21} \frac{t^2}{2} \me^{(\alpha-\beta)t}, \]
so that the arrival time  $t_3$, determined from setting $j_3(t_3)=\kappa$ is approximately
\be t_3\approx  \frac{2}{\alpha-\beta} W \left( \frac{ (\alpha-\beta)}{2\gamma} \sqrt{\frac{\mathrm{P}_{32}\mathrm{P}_{21}\chi_0}{2\kappa}}\right) .\label{eq:t3hardway} \ee
This expression is identical to (\ref{eq:linATex}).  }
{\color{black}
It is more tedious to derive estimates in this manner when there are more than one shortest path between nodes.  For example, suppose that node $1$ is connected to nodes $2$ and $3$ which are then both connected to node $4$.  Using the same assumptions as above we would then obtain that $j_4(t)$ should have an approximate solution of the form
\[ j_4(t)\approx \gamma \mathrm{P}_{42}\me^{(\alpha-\beta)t}\int_0^t \me^{-(\alpha-\beta)\tau}j_2(\tau)d\tau +\gamma \mathrm{P}_{43}\me^{(\alpha-\beta)t}\int_0^t \me^{-(\alpha-\beta)\tau}j_3(\tau)d\tau. \]
Using expressions for $j_2(t)$ and $j_3(t)$ we then would find 
\[ j_4(t)\approx \gamma^2 \left(\mathrm{P}_{42}\mathrm{P}_{21} +\mathrm{P}_{43}\mathrm{P}_{31}\right) \chi_0 \frac{t^2}{2} e^{(\alpha-\beta)t},\]
where we note that $\rho_4=\mathrm{P}_{42}\mathrm{P}_{21} +\mathrm{P}_{43}\mathrm{P}_{31}$ and we then find the same arrival time estimate as in (\ref{eq:linATW}).}

This process can then be continued and refined.  In terms of providing accurate arrival time estimates for (\ref{eq:main}) this method is cumbersome in comparison to the matrix exponential expansion performed in \cite{chen18}, however, it provides a different point of view to see how the arrival time estimates in (\ref{eq:linATex}) may be derived and will be used later in cases where the matrix exponential approach does generate accurate estimates. 

We conclude this section with three remarks. 

\begin{rmk} Suppose that in (\ref{eq:j3approx}) we had instead used the expression for $j_2(t)$ given in (\ref{eq:j2sol}).  Then our solution for $j_3(t)$ would read (approximately)
\[ j_3(t)\approx \gamma^2 \mathrm{P}_{32}\mathrm{P}_{21} \chi_0 \frac{t^2}{2} \me^{(\alpha-\beta) t},  \]
and the arrival time estimate would be exactly as that derived from (\ref{eq:themainlin}) despite the fact that $j_1(t)$ is, unrealistically, assumed to grow exponentially on the entire time interval $0<t<t_3$.    
\end{rmk}

\begin{rmk}  The arrival time estimates in (\ref{eq:linAT}) are observed in numerical simulations to be good predictors for arrival times in the nonlinear model if both $\chi_0$ and $\kappa$ are small (again in the limit as $\gamma\to 0$).  In light of the discussion above, we see that $\chi_0$ small is required so that $j_1(t)\approx \chi_0 \me^{(\alpha-\beta)t}$ is accurate while $\kappa$ small is needed so that the threshold is crossed when $j_n(t)$ is small and the approximation in (\ref{eq:j2approx}) is valid.
\end{rmk}
{\color{black}
\begin{rmk} Suppose that the local dynamics in (\ref{eq:main}) are changed to be SIS type dynamics where recovered individuals become susceptible at some rate $\gamma r_n$.  Linearizing at the disease free state, it once again turns out the the infected dynamics decouple and are described by (\ref{eq:linsol}).  Therefore, the linear arrival time estimates are exactly the same as in (\ref{eq:main}).  Numerical simulations of the SIS model show that the linear arrival times remain good estimates for the nonlinear arrival times in this model. 
\end{rmk}

}

\section{Arrival times for a SEIR model} \label{sec:SEIR}
The local dynamics in (\ref{eq:main}) are described by the simple SIR model.  We now demonstrate how to extend the arrival time estimates for other types of disease models.  For example, many disease models incorporate an exposed population that accounts for the latency in infection once an individual becomes infected with a disease.  The generalization of (\ref{eq:main}) to this case is 
\begin{eqnarray}
\partial_{t}s_{n}&=&-\alpha s_{n}j_{n}+ \gamma\sum_{m\neq n}P_{nm}(s_{m}-s_{n}) \nonumber \\
\partial_{t}e_{n}&=&\alpha s_{n}j_{n}-\sigma e_n +\gamma\sum_{m\neq n}P_{nm}(e_{m}-e_{n}) \nonumber \\
\partial_{t}j_{n}&=&\sigma e_n -\beta j_{n}+ \gamma\sum_{m\neq n}P_{nm}(j_{m}-j_{n}) \nonumber \\
\partial_{t}r_{n}&=&\beta j_{n}+ \gamma\sum_{m\neq n}P_{nm}(r_{m}-r_{n}) 
.\label{eq:seir} \end{eqnarray}
We demonstrate how to derive arrival time estimates in this case.  First, linearize about the disease free state $(1,0,0,0)^T$ to obtain (neglecting the recovered population once again) 
\begin{eqnarray*}
 \sv_t &=& -\alpha \jv +\gamma \left( \mathrm{P}-\mathrm{I}\right) \sv  \\
 \ev_t &=& -\sigma \ev + \alpha \jv +\gamma \left( \mathrm{P}-\mathrm{I}\right) \ev \\
\jv_t &=&  \sigma \ev-\beta \jv +\gamma \left( \mathrm{P}-\mathrm{I}\right) \jv .
\end{eqnarray*}
Note that the $\ev$-$\jv$ sub-system decouples.  Write this sub-system abstractly as 
\be \vh_t=\tilde{\mathrm{A}} \vh+\gamma \left(\tilde{\mathrm{P}}-\mathrm{I}\right)\vh, \label{eq:H} \ee
where $\vh=\left(e_1,j_1,e_2,j_2,\dots,e_N,j_N\right)^T$, the {\color{black} $2N\times 2N$ } matrix $\tilde{\mathrm{P}} = \mathrm{P} \otimes \mathrm{I}_2$, and  the {\color{black} $2N\times 2N$ } matrix $  \tilde{\mathrm{A}} = \mathrm{I}_N \otimes \mathrm{A}$ with
\begin{equation*}
    \mathrm{A} = \left(\begin{array}{cc}
         -\sigma & \alpha \\ \sigma & -\beta\end{array}\right) .
\end{equation*}
The matrix $\mathrm{A}$ is the local linearization of the reaction terms for (\ref{eq:seir}) at a fixed node.  Let 
\[ \lambda_\pm(\alpha,\sigma,\beta)=\frac{-(\beta + \sigma + 2\gamma) + \sqrt{(\beta - \sigma)^2 + 4\sigma\alpha}}{2},\] 
 be the two eigenvalues of $\mathrm{A}$ and note that since $\mathrm{det}(\mathrm{A})=-\sigma(\alpha-\beta)$ then if  $\alpha-\beta>0$ we have $\lambda_+>0>\lambda_-$ and the disease free state is unstable. Note that the instability threshold for the SEIR model is identical to that of the SIR model. The matrix $\mathrm{A}$ is diagonalizable. Let $\mathrm{A}=\mathrm{Q} \mathrm{D} \mathrm{Q}^{-1}$, with 
\[ \mathrm{D} = \left(\begin{array}{cc}
         \lambda_+ & 0 \\ 0 & \lambda_- \end{array}\right) , \quad \mathrm{Q} = \left(\begin{array}{cc}
         \Gamma_+ & \Gamma_- \\ 1 & 1 \end{array}\right),\]
 where  $\Gamma_\pm(\alpha,\sigma,\beta)  =  \frac{\beta - \sigma \pm \sqrt{(\beta - \sigma)^2 + 4\sigma\alpha}}{2\sigma}$.  Equation (\ref{eq:H}) can be solved using the matrix exponential as 
\[ \vh(t)=\me^{(\tilde{\mathrm{A}} + \gamma(\tilde{\mathrm{P}} - \mathrm{I}))t}\vh_0. \]
Key to the derivation of the arrival time estimate in the SIR model is the ability to separate the homogeneous growth due to the instability from the diffusion due to the coupling matrix $\mathrm{P}$.  Such a decomposition is possible here since the  matrices $\tilde{\mathrm{A}}$ and $\tilde{\mathrm{P}}$ commute which we verify using properties of the Kronecker product,
\begin{align}
    \tilde{\mathrm{A}}\tilde{\mathrm{P}} &= (\mathrm{I}_N \otimes A)(\mathrm{P} \otimes \mathrm{I}_2) = (\mathrm{I}_N \mathrm{P}) \otimes (A \mathrm{I}_2)  \nonumber \\ \nonumber
    &= (\mathrm{P} \mathrm{I}_N) \otimes (\mathrm{I}_2 A) = (\mathrm{P} \otimes \mathrm{I}_2)(\mathrm{I}_N \otimes A) = \tilde{\mathrm{P}}\tilde{\mathrm{A}}.
\end{align}
{\color{black} Since the matricies commute we can therefore write the solution }
\be \vh(t)=\me^{(\tilde{\mathrm{A}}-\gamma\mathrm{I})t}\me^{ \gamma\tilde{\mathrm{P}}t}\vh_0, \label{eq:seirlinsol} \ee
and expand the matrix exponentials as 
\[
    \me^{(\mathrm{I}_N \otimes \mathrm{A})t} = \sum_{j = 0}^{\infty} \frac{t^j(\mathrm{I}_N \otimes \mathrm{A})^j}{j!} = \sum_{j = 0}^{\infty}\frac{\mathrm{I}_N \otimes \mathrm{A}^j}{j!}t^j, \quad 
    \me^{(\mathrm{P} \otimes \mathrm{I}_2)t} = \sum_{k = 0}^{\infty} \frac{t^k (\mathrm{P} \otimes \mathrm{I}_2)^k}{k!} = \sum_{k = 0}^{\infty}\frac{\mathrm{P}^k \otimes \mathrm{I}_2}{k!}t^k .
\]
To calculate arrival times for a disease propagating from city $n$ to city $m$, we specify that at time zero we have some proportion, $\chi_0$, of the infected population in city $n$ and calculate when the infected population exceeds some threshold $\kappa$ at city $m$.  Thus, $\vh_0= \chi_0 \boldsymbol{\delta}_n \otimes \tilde{\boldsymbol{\delta}}_2$, where $\boldsymbol{\delta}_n$ denotes the standard Euclidean basis vector in $\mathbb{R}^N$ while $\tilde{\boldsymbol{\delta}}_j$ is the same for $\mathbb{R}^2$.   This leads to the following equation to determine the arrival times, which we simplify using properties of the Kronecker product,

\begin{eqnarray*} \kappa&=&  \chi_0 \left( \boldsymbol{\delta}_m \otimes\tilde{\boldsymbol{\delta}}_2 \right)^T \left( \mathrm{I}_N \otimes \mathrm{Q} \me^{\mathrm{D}\tau_{mn}} \mathrm{Q}^{-1}\right) \left( \sum_{k = 0}^{\infty}\frac{\gamma^k \mathrm{P}^k \otimes \mathrm{I}_2}{k!}\tau_{mn}^k  \left( \boldsymbol{\delta}_n \otimes \tilde{\boldsymbol{\delta}}_2  \right) \right) \\
&=&  \chi_0 \left( \boldsymbol{\delta}_m^T \otimes \tilde{\boldsymbol{\delta}}_2^T \mathrm{Q} \me^{\mathrm{D}\tau_{mn}} \mathrm{Q}^{-1}\right) \left( \sum_{k = 0}^{\infty}\frac{\gamma^k \mathrm{P}^k \boldsymbol{\delta}_n \otimes \tilde{\boldsymbol{\delta}}_2}{k!}\tau_{mn}^k \right) \\
&=&  \chi_0\left( \sum_{k=0}^\infty\gamma^k  \frac{\boldsymbol{\delta}_m^T \mathrm{P}^k \boldsymbol{\delta}_n}{k!}\tau_{mn}^k \right) \otimes \left( \tilde{\boldsymbol{\delta}}_2^T \mathrm{Q} \me^{\mathrm{D}\tau_{mn}} \mathrm{Q}^{-1} \tilde{\boldsymbol{\delta}}_2 \right) .
\end{eqnarray*}
Since both terms in parenthesis are scalar the Kronecker product in the last line is actually just a multiplication.  Assuming again that the leading order term in $\gamma$ will dominate we can neglect all terms in the sum aside from the one where $k=d_m$.  For the terms on the right, we simplify to
\[ \tilde{\boldsymbol{\delta}}_2^T \mathrm{Q} \me^{\mathrm{D}\tau_{mn}} \mathrm{Q}^{-1} \tilde{\boldsymbol{\delta}}_2= \frac{1}{\Gamma_--\Gamma_+} \left(\Gamma_- \me^{\lambda_+ \tau_{mn}} - \Gamma_+ \me^{\lambda_- \tau_{mn}} \right). \]
We neglect the exponential involving $\lambda_-$ since $\lambda_-<0$ and obtain 
\[ \kappa = \chi_0 \frac{\gamma^{d_m} \rho_m \tau_{mn}^{d_m}}{d_m!}\left( \frac{\Gamma_-}{\Gamma_--\Gamma_+}\right) \me^{\lambda_+ \tau_{mn}}.\] 
As was the case for the SIR model, this equation can be solved using the Lambert-W function and we obtain the estimate

\be \tau_{mn}=\frac{d_m}{\lambda_+(\alpha,\sigma,\beta)} W\left(\frac{1}{\gamma}  \frac{\lambda_+(\alpha,\sigma,\beta) }{ d_m} \left(\frac{\kappa d_m! (\Gamma_-(\alpha,\sigma,\beta)-\Gamma_+(\alpha,\sigma,\beta))}{\chi_0 \rho_m \Gamma_-(\alpha,\sigma,\beta)}\right)^{1/d_m} \right)+o(1). \label{eq:ATSEIR} \ee
{\color{black} Recall that as $\sigma\to \infty$ the period of time that individuals spend in the exposed phase tends to zero and we anticipate that the arrival times for the SEIR model should approach those for the SIR model in this limit.  Indeed, we observe that as $\sigma\to \infty$, $\Gamma_\pm\to -\frac{1}{2}\pm \frac{1}{2}$ and $\lambda_+\to \alpha-\beta$  and the arrival time estimate (\ref{eq:ATSEIR}) converges to the estimate for the SIR model, see (\ref{eq:linATW}).   Comparisons between this arrival time and those in direct numerical simulations of (\ref{eq:seir}) are presented in Figure~\ref{fig:SEIR}.  }

\begin{figure}
    \centering
     \subfigure{\includegraphics[width=0.3\textwidth]{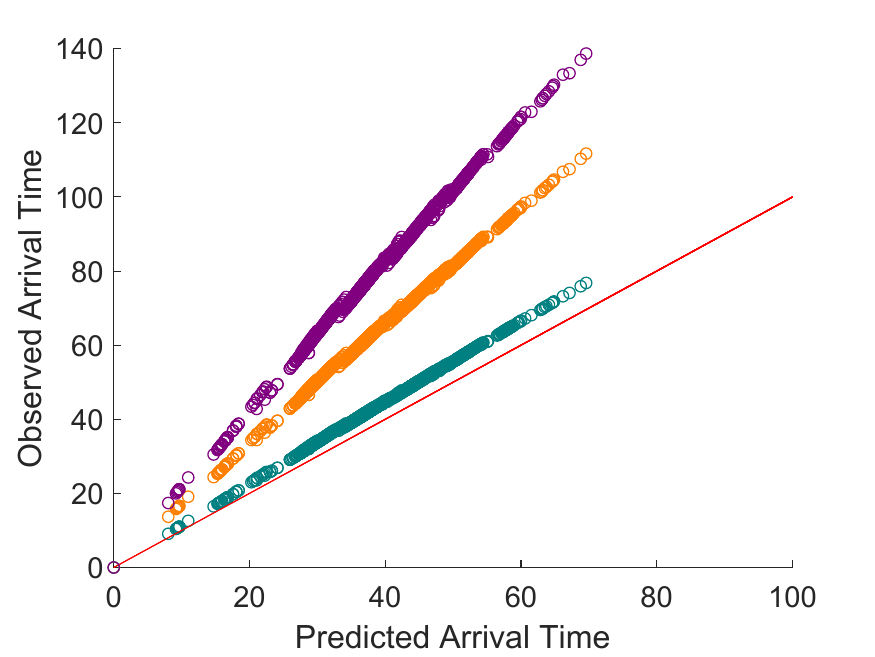}}
 \subfigure{\includegraphics[width=0.3\textwidth]{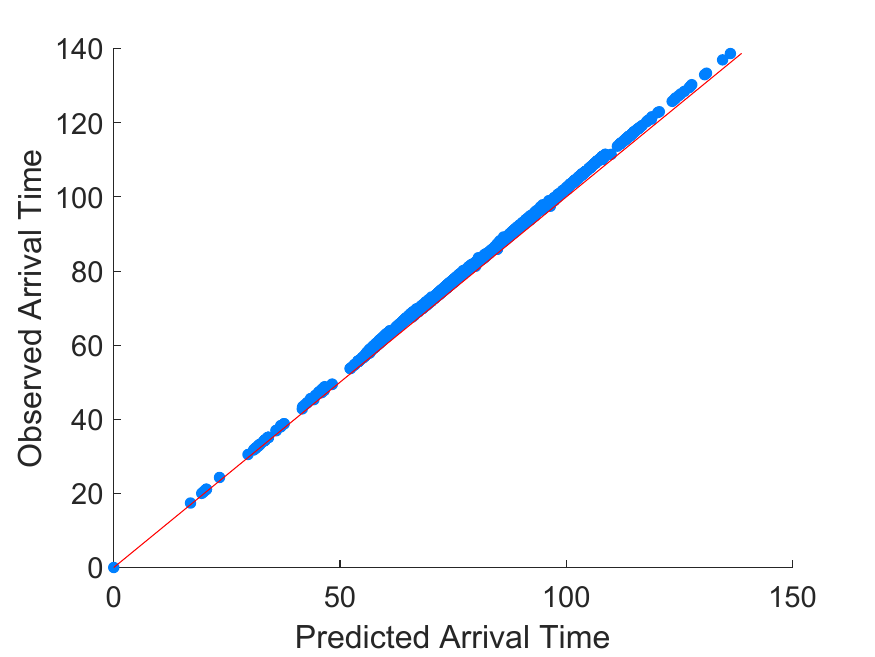}}
  \subfigure{\includegraphics[width=0.3\textwidth]{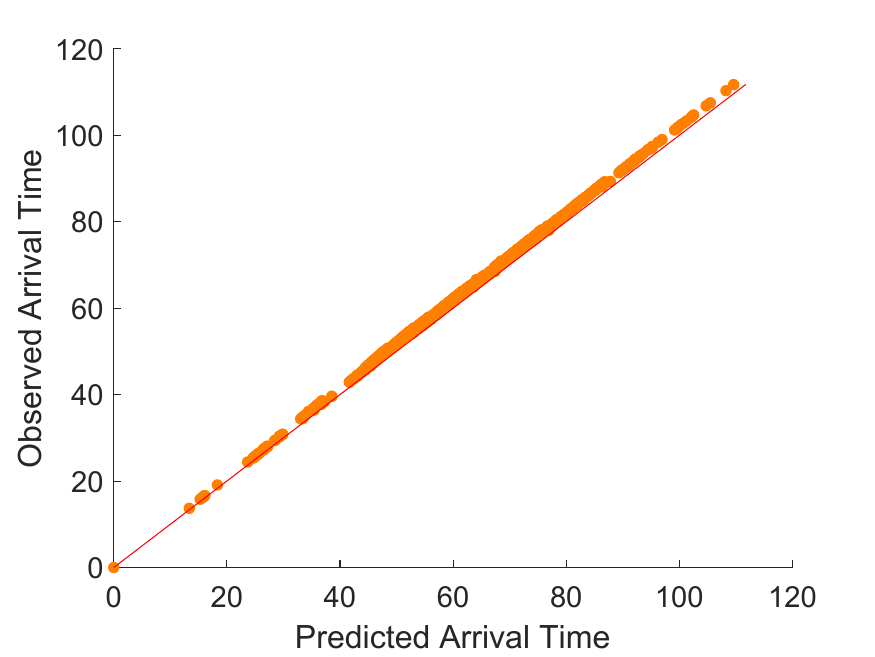}}
   \caption{Arrival times for the SEIR model (\ref{eq:seir}) versus predictions.  In the left panel we show arrival times observed in numerical simulations versus predicted arrival times based upon the arrival time estimate for the SIR model; see (\ref{eq:linATW}).  Three simulations are performed with $\sigma=0.5$, $\sigma=1.0$ and $\sigma=10$.  As anticipated the arrival time of the disease is delayed by the incorporation of an exposed phase.  For $\sigma=10$, individuals reside in the exposed phase for a short amount of time and the arrival times for the SEIR model are close to those of the SIR model.  In the right two panels, we compare observed arrival times in the SEIR model to the prediction (\ref{eq:ATSEIR}) for $\sigma=0.5$ (middle panel) and $\sigma=1.0$ (right panel).  Here $\alpha=1$, $\beta=0.25$ and $\gamma=0.001$.   }
    \label{fig:SEIR}
\end{figure}

{\color{black}
\begin{rmk}  Arrival times estimates analogous to formula (\ref{eq:ATSEIR}) can be obtained for other compartmental models as well.  Let $A$ denote the linearization of the local model near the disease free equilibrium point of an $\ell$ component disease model.  Then the linearization at the unstable state in the network system can be expressed as 
\[ \vh_t=\tilde{\mathrm{A}} \vh+\gamma \left(\tilde{\mathrm{P}}-\mathrm{I}\right)\vh, \]
for $\vh\in\mathbb{R}^{N\ell}$.  Assume that the components of $\vh$ are ordered so that the infected proportion is expressed first. Then the solution of the linear equation can be written as 
\[ \vh(t)=\me^{(\tilde{\mathrm{A}}-\gamma\mathrm{I})t}\me^{ \gamma\tilde{\mathrm{P}}t}\vh_0, \]
where $\tilde{\mathrm{P}} = \mathrm{P} \otimes \mathrm{I}_\ell$, and  $  \tilde{\mathrm{A}} = \mathrm{I}_N \otimes \mathrm{A}$
Suppose that $\mathrm{A}$ is diagonalizable with maximal eigenvalue $\lambda_1(\mathrm{A})$ and $\mathrm{A}=\mathrm{Q}\mathrm{D}\mathrm{Q}^{-1}$ where $\mathrm{Q}=\left( \mathbf{q}_1 \ \mathbf{q}_2 \ \dots \ \mathbf{q}_\ell\right)$ with $\mathbf{q}_j$ the eigenvectors of $\mathrm{A}$.  Then  $\lambda_1$ is the upper left entry of $\mathrm{D}$. Let $\tilde{\mathrm{D}}=\mathrm{diag}(\lambda_1, 0,\dots,0)$.  Then we can estimate arrival times by solving
\[ \kappa = \chi_0 \frac{\gamma^{d_m} \rho_m \tau_{mn}^{d_m}}{d_m!} \Xi(\mathrm{A}) \me^{\lambda_+ \tau_{mn}},\] 
where the constant $\Xi(\mathrm{A})$ comes from projecting the initial condition of the local dynamics onto the leading eigenvector and is defined as
\[ \Xi(\mathrm{A})=\tilde{\boldsymbol{\delta}}_1^T \mathrm{Q} \tilde{\mathrm{D}}\mathrm{Q}^{-1}\tilde{\boldsymbol{\delta}}_1= \frac{q_{11}}{\mathrm{det}(\mathrm{Q})} \mathrm{det}\left[ \boldsymbol{\delta}_1 \ \mathbf{q}_2  \ \mathbf{q}_3 \ \dots \ \mathbf{q}_\ell \right]. \]
This leads to the arrival time estimate 
\be  \tau_{mn}=\frac{d_m}{\lambda_1(\mathrm{A})} W\left(\frac{1}{\gamma}  \frac{\lambda_1(\mathrm{A}) }{ d_m} \left(\frac{\kappa d_m! \Xi(\mathrm{A}) }{\chi_0 \rho_m }\right)^{1/d_m} \right)+o(1).  \label{eq:genATformula} \ee
\end{rmk}
}

\section{Pushed fronts: faster invasion speeds due to nonlinearities} \label{sec:pushed} 

{\color{black} 
Not all invasion fronts are linearly determined.   For the SIR model considered in (\ref{eq:main}) the nonlinearity suppresses growth and the maximal growth rate of the infection occurs when when the infected population is small.  In this section, we demonstrate that nonlinearities which amplify growth can lead to faster-than-linear arrival times.  This phenomena is well known in the PDE setting where the resulting fronts are referred to as {\em pushed}; see for example \cite{hadeler75,vansaarloos03}.

Consider the following meta-population model,
\begin{eqnarray}
\partial_{t}s_{n}&=&-\alpha s_{n}j_{n}-\rho s_nj_n^2+ \gamma\sum_{m\neq n}P_{nm}(s_{m}-s_{n}) \nonumber \\
\partial_{t}j_{n}&=&\alpha s_{n}j_{n}+\rho s_nj_n^2-\beta j_{n}+ \gamma\sum_{m\neq n}P_{nm}(j_{m}-j_{n}) 
.\label{eq:pushed} \end{eqnarray}
The only difference between this system and (\ref{eq:main}) is the additional infection term $\rho s_n j_n^2$.   This system is motivated by recent work in \cite{iacopini19} where the role of higher order interactions in social epidemics is studied. Recall that the quadratic terms $\alpha s_nj_n$ represent infections occurring due to interactions between infected and susceptible individuals.  The cubic term  $\rho s_n j_n^2$ represents infections due to group (we consider only groups of size three for simplicity) interactions and expresses the higher probability of a susceptible individual adopting a new social norm if all the other members of one of their social groups has already adopted that norm.   We emphasize that the model in \cite{iacopini19} is an agent-based stochastic model without spatial structure and refer the reader to \cite{iacopini19} for more details.

We are interested in how these higher-order interactions affect arrival times.  Based upon our analysis of the SIR and SEIR models the natural starting point is to compute linearly determined arrival times.  In fact, the linearization of (\ref{eq:pushed}) near the unstable disease free state is equivalent to that of (\ref{eq:main}) and therefore the linear arrival time estimates for this system are also identical.  However, numerical simulations reveal faster invasion speeds; see Figure~\ref{fig:pushed}.  We proceed to explain and predict this faster invasion speed starting first with a discussion of the local dynamics of (\ref{eq:pushed}).

\subsection{The local dynamics }
To obtain modified arrival times estimates using the approach presented in Section~\ref{sec:cascade} we need an estimate for the local dynamics of (\ref{eq:pushed}) at a fixed city in the absence of diffusion.  In this section, we obtain an approximation for these dynamics in the limit as $\rho\to\infty$. This corresponds to a regime where infections via group interactions dominates those stemming from pairwise interactions. 

Consider the local dynamics of (\ref{eq:pushed}), 
\begin{eqnarray}
S'&=& -\alpha SI -\rho SI^2  \nonumber \\
I'&=& \alpha SI +\rho SI^2 -\beta I. \label{eq:SIRsimplicial}
\end{eqnarray}
We desire estimates on the solution of (\ref{eq:SIRsimplicial}) for initial conditions starting near the disease free steady state $(S,I)=(1,0)$.  We will consider the case when $\rho\gg 1$ so that we can view (\ref{eq:SIRsimplicial}) as a singularly perturbed system.  Let $\e=\frac{1}{\rho} \ll 1$.  After transformation of the independent variable by $\tau=\frac{t}{\e}$ we obtain the following system of equations 
\begin{eqnarray}
\frac{dS}{d\tau} &=& -SI^2-\e \alpha SI   \nonumber \\
\frac{dI}{d\tau} &=& SI^2 +\e \alpha SI -\e \beta I. \label{eq:T1}
\end{eqnarray}
Setting $\e=0$ we obtain the  so-called reduced fast equation,
\begin{eqnarray}
\frac{dS}{d\tau} &=& -SI^2 \nonumber \\
\frac{dI}{d\tau} &=& SI^2. \label{eq:T1fast}
\end{eqnarray}
This reduced equation is, to leading order, the same as the system of equations analyzed in \cite{gucwa09} and so we follow their analysis.  System (\ref{eq:T1fast}) has two lines of equilibria: in the language of Geometric Singular Perturbation Theory these are called slow manifolds --  $\mathcal{M}_I=\{ (S,I) \ | \ S=0 \ \}$ and $\mathcal{M}_S=\{ (S,I) \ | \ I=0 \ \}$; see for example \cite{jones95}.  The two manifolds intersect at the origin.  For $I>0$, the manifold $\mathcal{M}_I$ is normally hyperbolic whereas the manifold $\mathcal{M}_S$ lacks normal hyperbolicity.   

Let $W=S+I$. Then for (\ref{eq:T1fast}), $W(\tau)$ is constant to leading order while  
\[
\frac{dI}{d\tau} = (W-I)  I^2. \]
For $\e$ small and away from $\mathcal{M}_S$ we therefore have that $W(\tau)$ is constant to leading order in $\e$ while $I(\tau)$ increases from zero to $W$.  This provides a leading order fast connection between the slow manifolds $\mathcal{M}_S$ and $\mathcal{M}_I$.  

In order to use the local solution to estimate arrival times in (\ref{eq:pushed}) we need some basic estimates on the form of the solution starting near $(1,0)$ for small $\e$.  We will consider the initial condition $I(0)=\kappa$ and $S(0)=1-\kappa$ with $\frac{\kappa}{\e}$ sufficiently small so that $\kappa<\e$.  We must then follow this initial condition in the slow time scales until $I(t)$ exceeds some threshold $\eta$ and the fast dynamics prescribed by (\ref{eq:T1fast}) take over and the solution quickly converges to the slow manifold $\mathcal{M}_I$.  Once it nears $\mathcal{M}_I$, then the solution relaxes exponentially to the origin since $I'\approx -\beta I$ there.  

Since $\mathcal{M}_S$ is not normally hyperbolic we can not directly use the linearization to estimate the solution before the transition time $\Omega$.  This lack of normal hyperbolicity can be traced to the fact that as $I\to0$ the dominant term on the right side of (\ref{eq:T1}) shifts from $SI^2$, which is formally $\O(1)$ in $\e$ to $\e I (\alpha S-\beta)$ which is formally $\O(\e)$.  

In appendix~\ref{sec:appendix} we mimic the geometric desingularization approach of \cite{gucwa09} to obtain estimates on $\Omega$.  We will obtain an approximation for the transition time 
\be \Omega\approx\frac{1}{\alpha-\beta}\log\left(\frac{\e (\alpha-\beta)}{\kappa}\right), \label{eq:Omega} \ee
and the solution to the local dynamics as 
\be I(t)\approx \left\{ \begin{array}{cc} \kappa \me^{(\alpha-\beta)t} & t<\Omega \\ \me^{-\beta (t-\Omega) } & t\geq \Omega. \end{array} \right. \label{eq:Ipushed} \ee
Solutions of the system (\ref{eq:SIRsimplicial}) are shown in Figure~\ref{fig:pushedlocal}.

\begin{figure}
    \centering
     \subfigure{\includegraphics[width=0.33\textwidth]{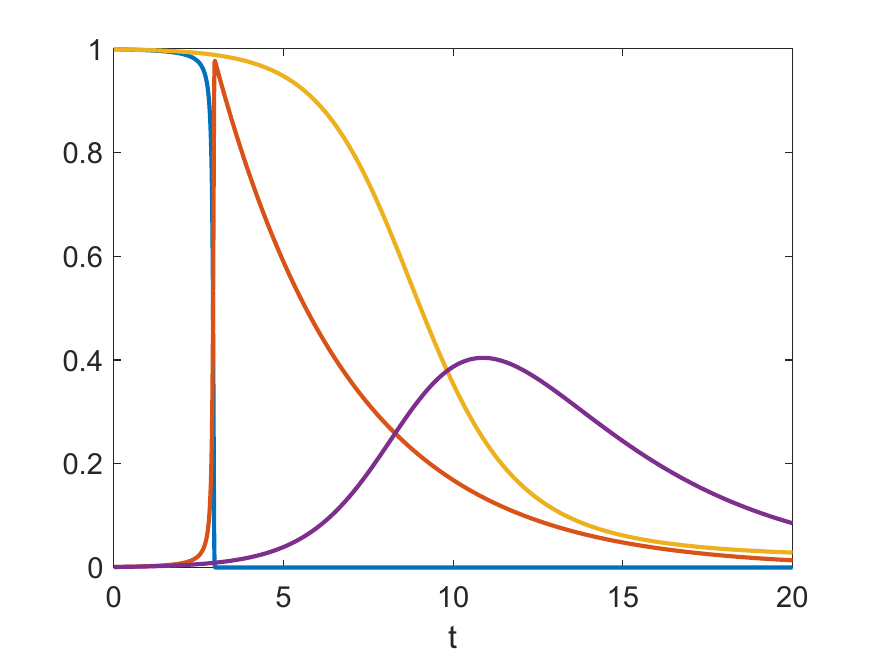}}
 \subfigure{\includegraphics[width=0.33\textwidth]{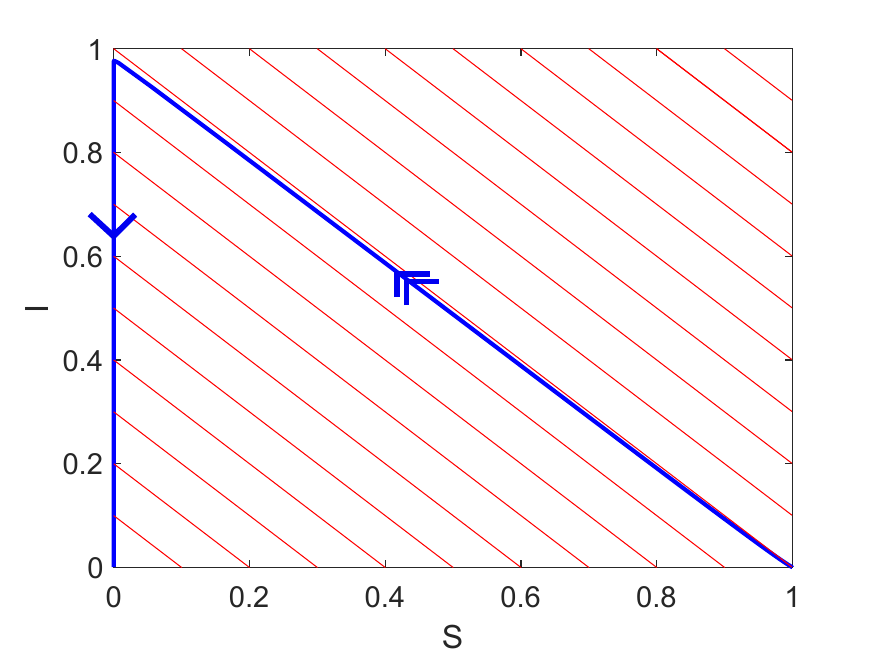}}
   \caption{The local dynamics for the system of equations in (\ref{eq:SIRsimplicial}).  On the left are numerically computed solution trajectories starting from the initial conditiond $I(0)=0.001$ and $S(0)=0.999$ for $\rho=100$ and $\rho=0$ (the standard SIR model) with $\alpha=1.50$ and $\beta=0.25$.    The red and blue curves are the infected and susceptible proportions respectively with $\rho=100$ while the purple and yellow are the infected and susceptible proportions when $\rho=0$.  Note that when $\rho=100$ the infected proportion has fast transition around $t=2.80$ with a theoretical estimate of $\Omega=2.6865$.  On the right, the solution with $\rho=100$ is plotted in $S-I$ phase space.  The red lines depict the invariant fast transition curves that connect the two slow manifolds.   }
    \label{fig:pushedlocal}
\end{figure}

\subsection{Arrival time estimates}  
We now turn our attention to making estimates of nonlinear arrival times using the approach outlined in Section~\ref{sec:cascade}.   Assume that the epidemic originates at node $n=1$.   For simplicity we assume that the initial infected proportion at city $1$ is $\kappa$ as in (\ref{eq:Ipushed}).  We then wish to estimate the nonlinear arrival times $t_m$ defined by the condition that $j_m(t_m)=\kappa$.  In fact, we will obtain the estimate
\be  t_m\approx \frac{d_m}{\alpha-\beta} \log\left(\frac{\kappa\alpha}{\gamma(\rho_m)^{1/ d_m} }\right)+d_m \Omega, \label{eq:tm} \ee
where we recall that $d_m$ is the minimal number of flights connecting city $m$ to the origin city (the graph distance),  $\rho_m$ is the random walk probability of moving between city $m$ and the origin city in exactly $d_m$ stops and $\Omega$ is the local transition time obtained (\ref{eq:Omega}).  

To verify this we first consider the city $n=2$ which we suppose is connected to origin node.  We  approximate the evolution of the infected population at node two by the equation
\[ \frac{dj_2}{dt}\approx (\alpha-\beta) j_2 +\gamma \mathrm{P}_{21} j_1(t). \]
Supposing that $j_1(t)$ evolves according to (\ref{eq:Ipushed})  leads to an approximate expression for $j_2(t)$,
\[ j_2(t)\approx \gamma \mathrm{P}_{21} \me^{(\alpha-\beta)t}\left( \int_0^\Omega \kappa d\tau +\int_\Omega^t \me^{-\alpha \tau} \me^{\beta\Omega} d\tau\right).  \]
The contribution from the second integral dominates and we estimate the arrival time by setting $j_2(t)=\kappa$ while ignoring lower order terms yields the equation
\[ \kappa = \gamma \frac{\mathrm{P}_{21}}{\alpha} \me^{(\alpha-\beta) (t-\Omega)}, \]
from which we estimate the arrival time $t_2$ by 
\[ t_2=\frac{1}{\alpha-\beta} \log \left(\frac{ \kappa \alpha}{\gamma \mathrm{P}_{21}}  \right)+\Omega, \]
which agrees with (\ref{eq:tm}).  
Extrapolating, we can consider the evolution at an arbitrary node $m$ where the evolution of the infected population is approximately governed by the following differential equation 
\[ \frac{dj_m}{dt} \approx (\alpha-\beta)j_m +\gamma\sum_{k: d_k=d_m-1} \mathrm{P}_{mk} j_k(t). \]
The sum represents the coupling to cities which are closer to the origin city.  For each $j_k$ we substitute
\[ j_k(t)\approx \left\{ \begin{array}{cc} \kappa \me^{(\alpha-\beta)(t-t_k)} & t-t_k<\Omega \\
 e^{-\beta(t-t_k-\Omega)} & t-t_k\geq \Omega\end{array}\right. \]
As we did for the node $j_2$, we approximate this solution as
\[ j_m(t)\approx \gamma  \me^{(\alpha-\beta) t} \sum_{k: d_k=d_m-1} \int_{t_k+\Omega}^t\mathrm{P}_{mk} \me^{-\alpha\tau} \me^{\beta (t_k+\Omega)} d\tau, \]
after which integrating and neglecting the upper bound of integration we obtain an arrival time estimate by solving 
\be \kappa =  \gamma \me^{(\alpha-\beta)t}   \sum_{k: d_k=d_m-1}\mathrm{P}_{mk} \me^{-(\alpha-\beta) (t_k+\Omega)} \label{eq:almosthere} . \ee
Using 
\[ t_k=\frac{d_k}{\alpha-\beta} \log\left(\frac{\kappa\alpha}{\gamma(\rho_k)^{1/ d_k} }\right)+d_k \Omega,\]
then (\ref{eq:almosthere}) becomes
\[ \kappa =\gamma\me^{(\alpha-\beta)(t-d_m\Omega)}   \sum_{k: d_k=d_m-1}  \frac{\gamma^{d_k}\rho_k}{\kappa^{d_k} \alpha^{d_k}}\frac{\mathrm{P}_{mk}}{\alpha}. \]
Since $\rho_m=\sum_{k: d_k=d_m-1} \rho_k\mathrm{P}_{mk}$ this is equivalent to 
\[ \frac{(\alpha \kappa)^{d_m}}{\gamma^{d_m} \rho_m}= \me^{(\alpha-\beta)(t_m-d_m\Omega)}, \]
and solving for $t_m$ we obtain the expression in (\ref{eq:tm}).

In Figure~\ref{fig:pushed}, we show comparisons between this arrival time estimate and those observed in numerical simulations.   }

\begin{figure}
    \centering
     \subfigure{\includegraphics[width=0.3\textwidth]{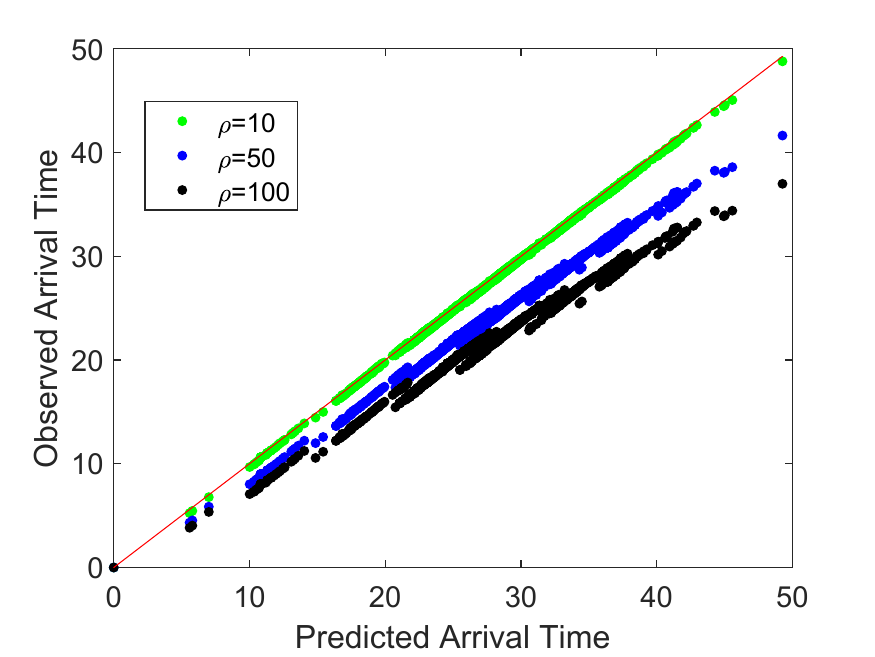}}
 \subfigure{\includegraphics[width=0.3\textwidth]{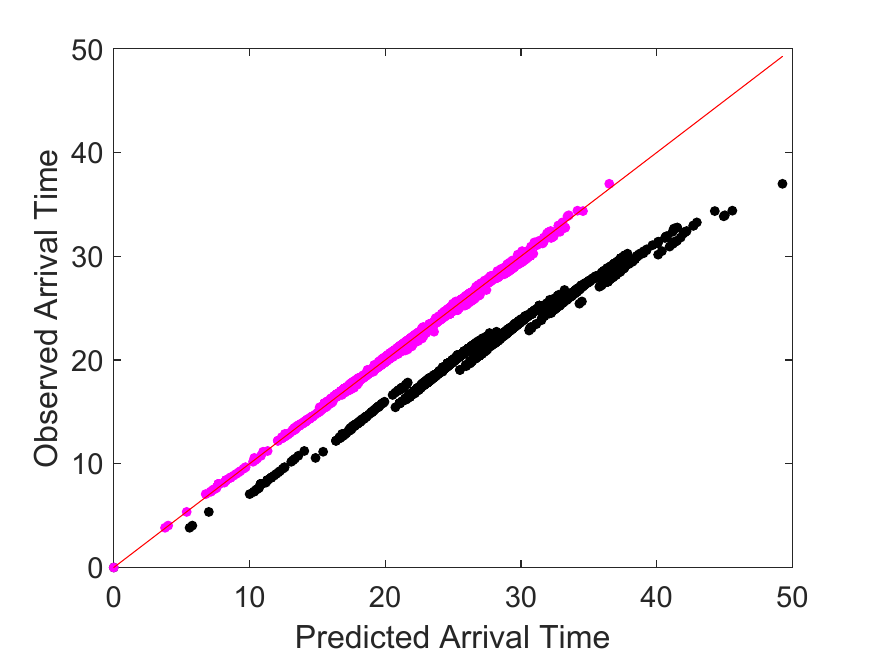}}
  \subfigure{\includegraphics[width=0.3\textwidth]{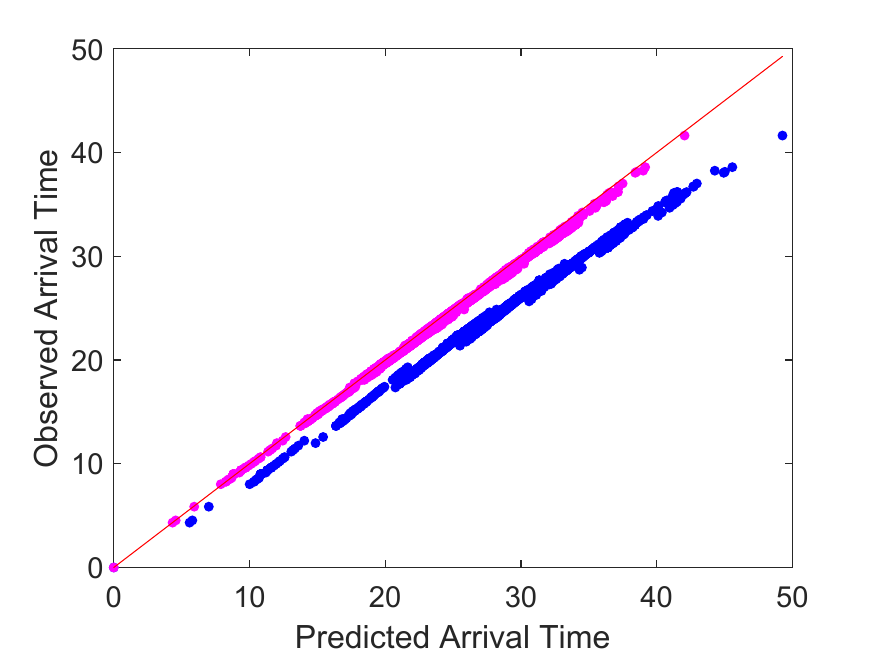}}
   \caption{Epidemic arrival times for (\ref{eq:pushed}) are plotted against various predictions.  On the left, we plot numerically observed arrival times in (\ref{eq:pushed}) for $\rho=10,50,100$ versus the linear arrival time estimate (\ref{eq:linATW}).  Note that large values of $\rho$ lead to faster invasion speeds.  In the other two panels we plot in magenta observed arrival times against our nonlinear prediction in  (\ref{eq:tm}) for $\rho=100$ (middle panel, with original data for comparison) and $\rho=50$ (right panel, with original data for comparison).  For all simulations, $\alpha=1.5$, $\beta=0.25$ and $\gamma=0.001$.   }
    \label{fig:pushed}
\end{figure}

\section{Inhomogeneous infection rates speed up average arrival times} \label{sec:inhomo}
System (\ref{eq:main}) assumes that local infection and recovery rates are uniform across all cities.  In this section, we consider how inhomogeneties in these rates affect arrival times by allowing the infection rate to vary by node.  {\color{black} We will suppose that the infection rate at each node is expressed as $\alpha+\omega_n$ where $\alpha$ is the mean infection rate and $\omega_n$ describes city by city variations from this mean.}  Local infection rates are expected to differ for a variety of factors and we point out that rather large differences are reasonable, for example, for diseases that exhibit seasonality where the infection rate may vary by hemisphere.  The question we will focus on is whether this inhomogeneity speeds up or slows down the invasion process as compared to the average.  We consider the system
\begin{eqnarray}
\partial_{t}s_{n}&=&-\alpha s_{n}j_{n}-\omega_n s_nj_n+ \gamma\sum_{m\neq n}P_{nm}(s_{m}-s_{n}) \nonumber \\
\partial_{t}j_{n}&=&\alpha s_{n}j_{n}+\omega_n s_nj_n -\beta j_{n}+ \gamma\sum_{m\neq n}P_{nm}(j_{m}-j_{n}),  \label{eq:inhomo} \end{eqnarray}
where $\sum_{n=1}^N \omega_n=0$ and $\alpha+\omega_n-\beta>0$ for all $n$.  

A similar argument as in Theorem~\ref{thm:super} shows that the linear arrival times once again place a lower bound on nonlinear arrival times.  However, in contrast to the SIR or SEIR models, in the inhomogeneous case the linear arrival times are no longer a reliable predictor for the nonlinear arrival times.  We make two observations.  First, if we write (\ref{eq:inhomo}) in vector form then due to the inhomogeneity of the reaction terms it is no longer the case that the reaction and migration matrices commute, so it is not possible to decompose the solution as in (\ref{eq:linsol}) or (\ref{eq:seirlinsol}).  More problematic is the fact that the linearized solution will be dominated by the largest eigenvalue, corresponding to the largest $\omega_n$,  and so the linear equation will asymptotically predict arrival times equivalent to the homogeneous case with infection rate equal to $\alpha+\max_n \omega_n$.  We refer the reader to Section~\ref{sec:cascade} to see why this unbounded growth does not degrade the arrival time estimate in the homogeneous case.

\begin{figure}
    \centering
     \subfigure{\includegraphics[width=0.3\textwidth]{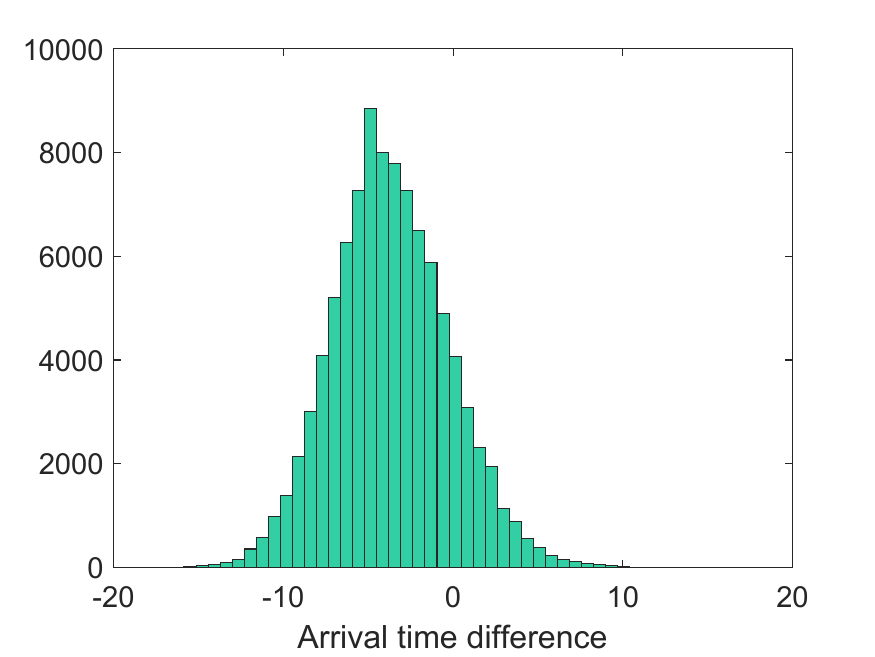}}
 \subfigure{\includegraphics[width=0.3\textwidth]{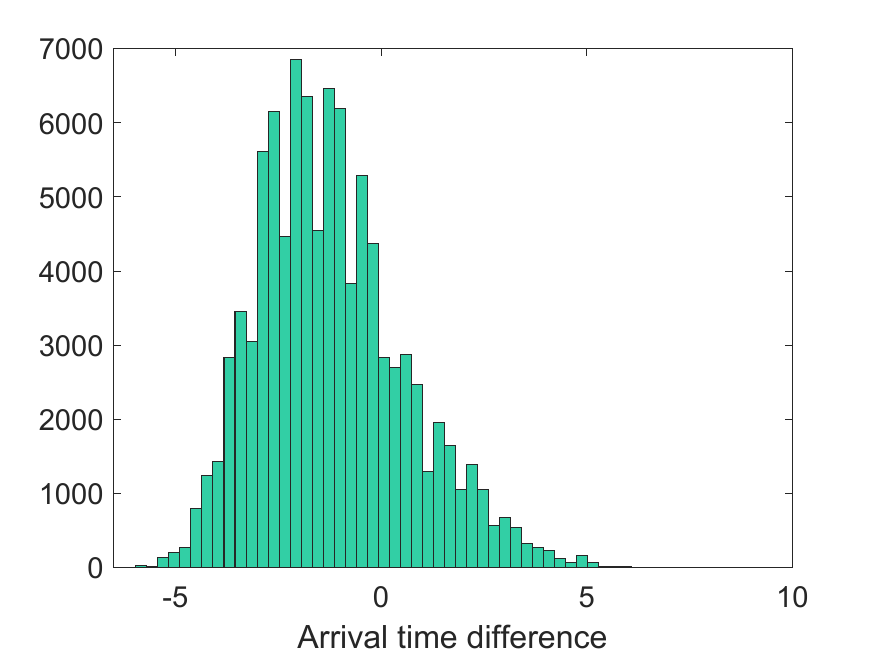}}
  \subfigure{\includegraphics[width=0.3\textwidth]{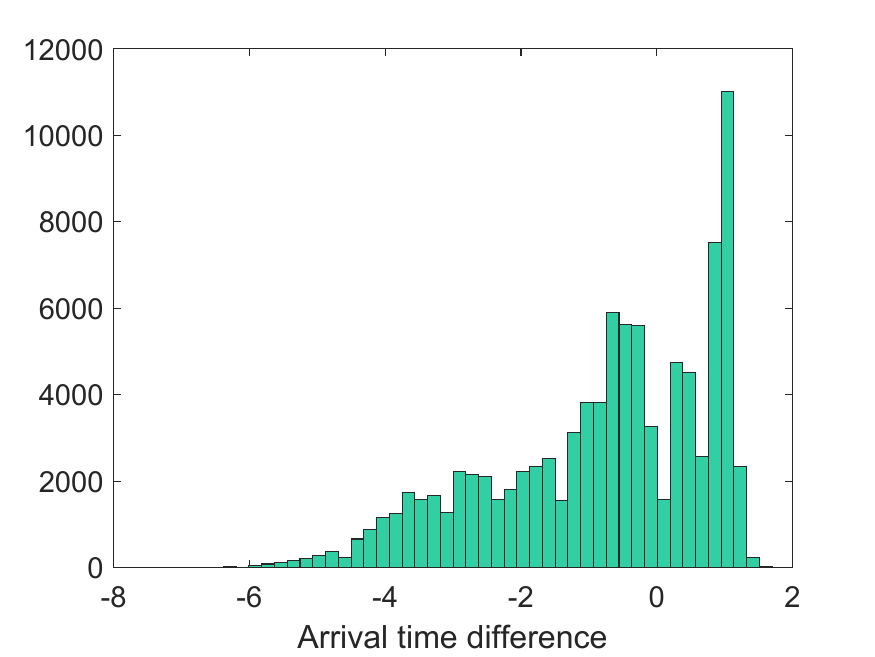}}
   \caption{Histograms showing the difference in arrival times between the inhomogeneous SIR model (\ref{eq:inhomo}) and the homogeneous model with constant infection rate equal to the mean of the inhomogeneous model.  Each figure represents observations over thirty different realizations of the random infection rates.  In each example $\alpha=1.0$, $\beta=0.25$ and $\gamma=0.001$.  On the left, $\alpha_n$ is drawn from a normal distribution, scaled by $0.2$ and then normalized to have zero mean.  On average, the epidemic arrives $3.66$ days earlier in the inhomogeneous model versus the homogeneous version.  In the middle panel, we randomly select half the nodes to have $\omega_n=0.2$ and the other half to have $\omega_n=-0.2$.  In this example, the arrival times are advanced by an average of $1.20$ days in the inhomogeneous versus homogeneous model.  On the right, we assign randomly one tenth of the nodes to have $\omega_n=0.18$ while the remaining nodes have $\omega_n=-0.02$.  Again, the inhomogeneous network has faster on average arrival times with a mean of $0.814$ days.       }
    \label{fig:inhomo}
\end{figure}

Numerical simulations suggest that arrival times in the inhomogeneous system are faster on average than arrival times in the homogeneous system.  These results are shown in Figure~\ref{fig:inhomo} for three different types inhomogeneities.  We argue that these faster arrival times are due to the following mechanism.  For the worldwide airline transportation network, most cities are connected by multiple shortest paths.  If the inhomogeneities are distributed randomly then it is likely that one of these shortest paths will connect the two cities along a route consisting entirely, or mostly, of cities with $\omega_n>0$.   Consulting (\ref{eq:linATex}) we expect this increase in  infection rate to decrease the arrival times at $\O(-\log(\gamma))$, whereas limiting the disease to spread along fewer of the possible shortest paths will decrease the random walk probability of traversing between the two cities.
  However, according to (\ref{eq:linATex}) this would only affect the arrival times at $\O(1)$.  We substantiate this point of view with some formal calculations as in Section~\ref{sec:cascade}.

For the nodes connected to the origin node, arrival time estimates can be derived as in Section~\ref{sec:cascade}.  Suppose that the disease originates at node $n=1$ and this node is connected to node $n=2$.  Let $\Gamma_n=\alpha+\omega_n-\beta$.  Then we approximate the dynamics of the infected proportion at node $2$ by 
\be j_2(t)\approx \gamma \mathrm{P}_{21} \chi_0 e^{\Gamma_2 t}\int_0^t e^{-\Gamma_2 \tau}e^{\Gamma_1 \tau }d\tau=\gamma \mathrm{P}_{21} \chi_0 e^{\Gamma_2 t}\left[ \frac{e^{(\omega_1-\omega_2)\tau}}{\omega_1-\omega_2}\right]_0^t. \label{eq:int} \ee
Setting this equal to the threshold value $\kappa$ we find two different arrival time estimates depending on whether $\omega_1>\omega_2$ or vice versa.  Let the arrival time $t_2$ be defined by $j_2(t_2)=\kappa$, then we get
\[ t_2\approx-\frac{1}{\Gamma_1}\log\left(\frac{1}{\gamma}\frac{\kappa (\omega_1-\omega_2)}{\chi_0 \mathrm{P}_{21}}\right), \quad \omega_1>\omega_2, \quad  t_2\approx-\frac{1}{\Gamma_2}\log\left(\frac{1}{\gamma}\frac{\kappa (\omega_2-\omega_1)}{\chi_0 \mathrm{P}_{21}}\right), \quad \omega_2>\omega_1. \]
{\color{black} In the case $\omega_1>\omega_2$ one can interpret the estimate as saying that the growth in infections at city $2$ is dominated by migration of infections from city $1$ where the local growth rate is larger.  In contrast, if $\omega_2>\omega_1$ then the growth of local infections at city $2$ dominates and the coupling to city $1$ is only required to transmit a few initial  infections to city $2$.} 
Both of these estimates rely on a gap between the $\omega_1$ and $\omega_2$ values so that one of the boundary terms in the integral in (\ref{eq:int}) can be ignored.  If these values are comparable then both terms need to be considered and the arrival time estimate will involve an approximation of the  Lambert-W function.  

The purpose of these informal calculations is to demonstrate that arrival times can be decreased by the disease passing through nodes with higher than average growth rates.  Now consider the grandchildren of the origin node.  These nodes are connected to the origin node through one or more children nodes.  For networks such as the worldwide airline network there are typically multiple such paths.  Thus, even if there is only a $1/2$ probability that the children nodes have higher than mean infection rates, there is a greater than even probability that there is a path with positive $\omega_n$ connecting the grandchild node to the origin.  This means that there exists a path over which the disease can spread faster leading to faster arrival times.  Numerical evidence for this is presented in Figure~\ref{fig:inhomoexplanation}.  {\color{black} Here we consider the worldwide airline network \cite{openflights} where each node has mean infection rate $\alpha=1.0$ and deviation $\omega_n=\pm 0.2$ selected uniformly at random}.  We then plot arrival times grouped by the minimum number of negative $\omega$ values among the shortest paths connecting each node to the origin node.  We see that the fewer such negative $\omega$ values the faster the arrival times and most (in this example $91\%$) of the nodes have a path connecting them to the origin node with two or less negative $\omega$ values.  

{\color{black} We also considered the effect of different infection rates in the southern versus northern hemispheres.  In the airline network taken from \cite{openflights}, only about $20\%$ of the airports reside in the southern hemisphere.  Some numerical results are presented in Figure~\ref{fig:southnorth}.  First we consider the case where the infection rate is greater in the southern than northern hemisphere.  This causes arrival times in most of the network to be advanced relative to the values predicted when the infection rate is constant and equal to the global mean.  If the prediction is changed to instead use the infection rate for the southern hemisphere then the predicted versus observed arrival times is almost linear for cities in the southern hemisphere owing to the fact that most pairs of cities in the southern hemisphere are connected by shortest paths visiting only other cities in the southern hemisphere.   When the infection rate is greater in the northern hemisphere a similar dynamic occurs and arrival times in the northern hemisphere are advanced and approximately linear.  Since $\omega_n$ is rather small in the northern hemisphere this advancement is not as dramatic as it is for larger infection rates in the southern hemisphere.     }

\begin{figure}
    \centering
     \subfigure{\includegraphics[width=0.33\textwidth]{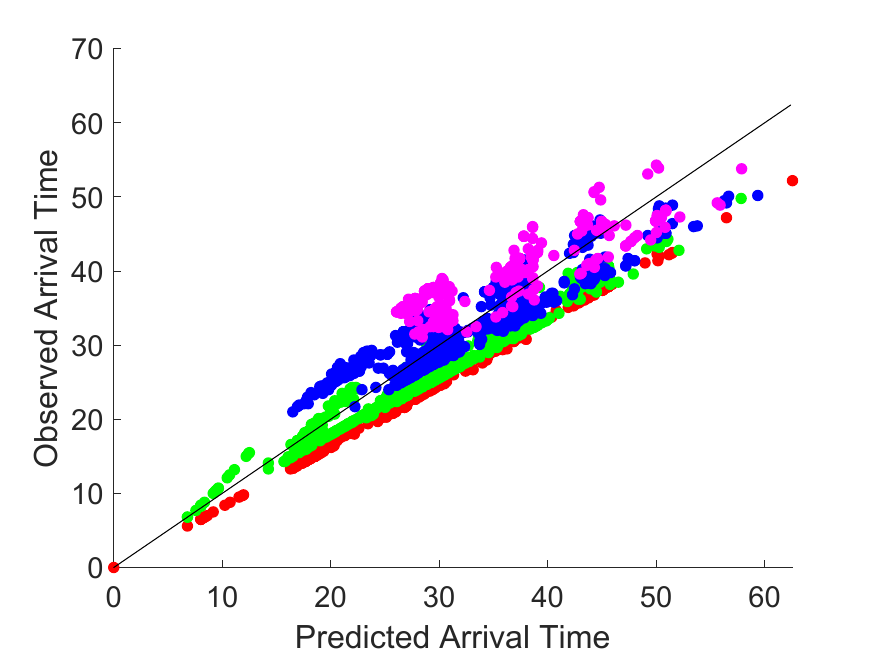}}
 \subfigure{\includegraphics[width=0.33\textwidth]{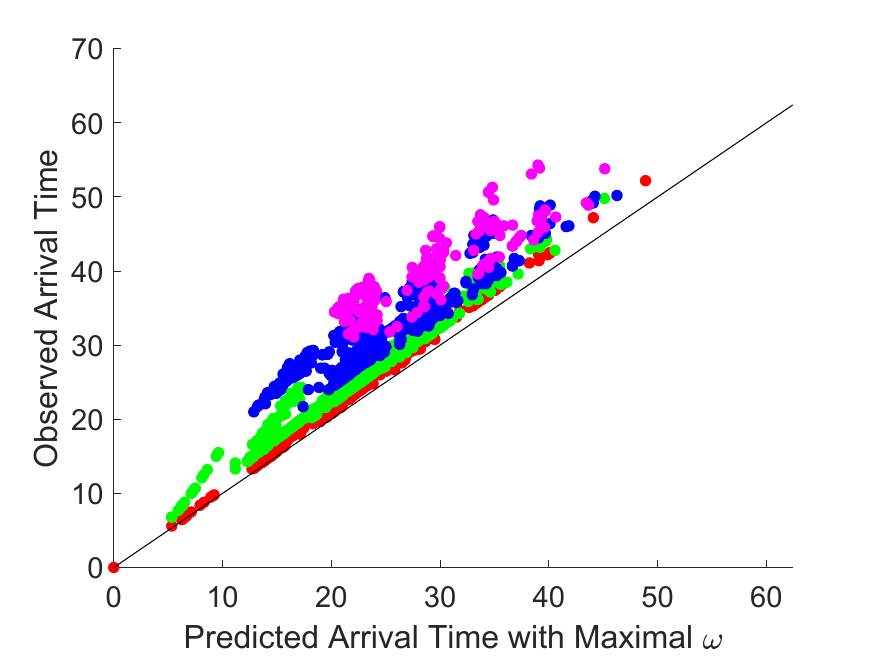}}
   \caption{Arrival times for (\ref{eq:inhomo}) on the worldwide airline transportation network with half the nodes assigned $\omega_n=0.2$ uniformly at random with the remaining nodes having $\omega=-0.2$. On the left, arrival times are plotted against the linear prediction for the mean value of $\alpha=1.0$ ($\beta=0.25$ and $\gamma=0.001$).  Consider all paths that connect a node $m$ to the origin node with the minimal graph distance $d_m$.  The data points in red are those for which there exists a minimal path on which all $\omega_n>0$.  
  Green corresponds to nodes with a minimal path with exactly one negative $\omega_n$.  Blue nodes have two negative $\omega$ values while magenta has three.  The arrival times of all red nodes are advanced in the inhomogeneous system.  Around $95\%$ of the nodes for which there exists a minimal path with exactly one $\omega_n<0$ arrive faster (green nodes)  and around $67\%$ of the nodes with minimal paths with exactly two $\omega_n<0$ arrive faster (blue nodes).   This covers $91\%$ of the total nodes in the network.    On the right, we compare arrival times in (\ref{eq:main}) with the linear prediction (\ref{eq:linATW}) assuming that all nodes have $\alpha=1.2$.  Observe that this constitutes a reasonable prediction for the arrival times at nodes with all a path of all positive $\omega$ values (red data points).    }
    \label{fig:inhomoexplanation}
\end{figure}

\begin{figure}
    \centering
     \subfigure{\includegraphics[width=0.3\textwidth]{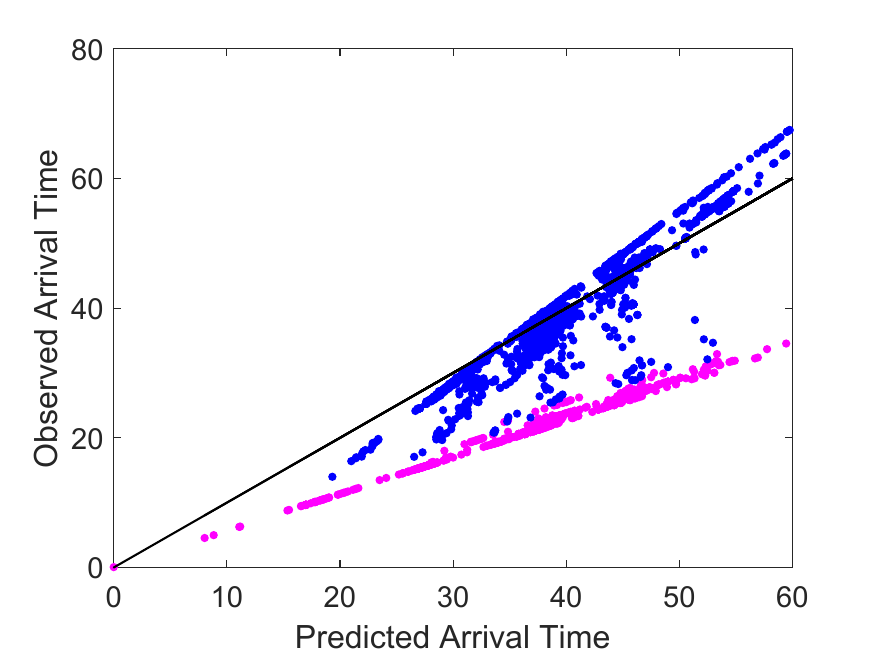}}
 \subfigure{\includegraphics[width=0.3\textwidth]{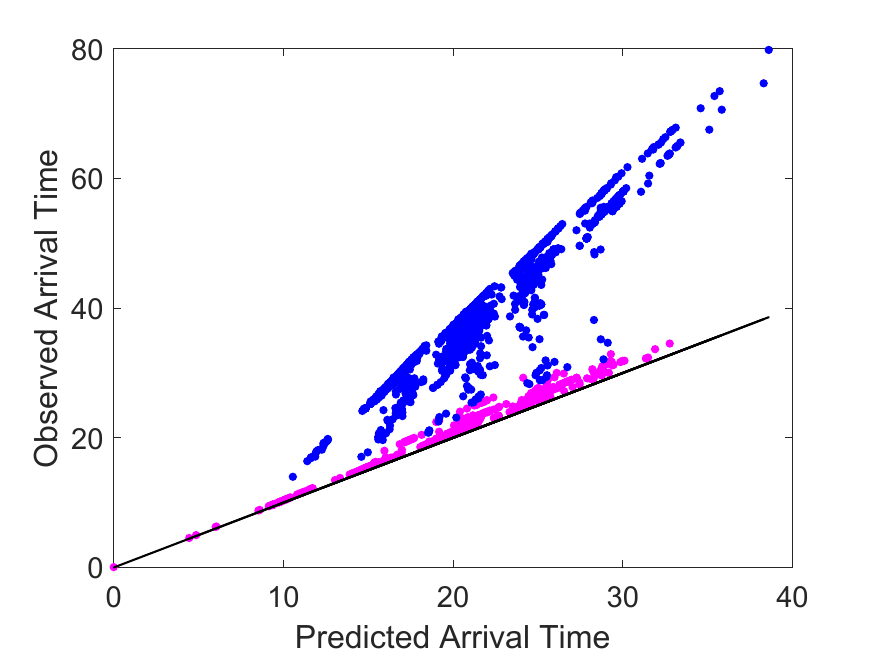}}
  \subfigure{\includegraphics[width=0.3\textwidth]{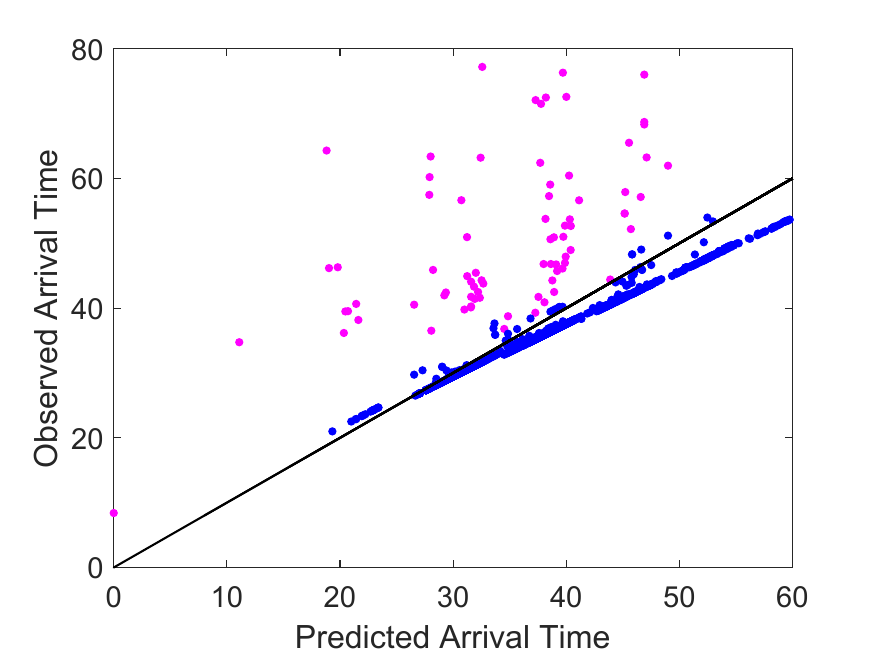}}
   \caption{Arrival times versus predictions for the worldwide airline transportation network \cite{openflights} with $\omega_n$ selected by hemisphere.  In all simulations, the mean infection rate is fixed to $\alpha=1.0$, the recovery rate is fixed to $\beta=0.25$ and the diffusion parameter is fixed to $\gamma=0.001$.  In all simulations the original city of infection resides the southern hemisphere.  On the left, $\omega_n>0$ for those airports in the southern hemisphere and $\omega_n<0$ for those airports in the northern hemisphere.  The purple data points are arrival times for cities in the southern hemisphere while the blue dots are arrival times for airports in the northern hemisphere.  In the left panel, the predicted arrival time is the linear arrival time estimate (\ref{eq:linAT}) with $\alpha$ fixed to be the mean infection rate.  In the middle panel, the predicted arrival time is instead the the linear arrival time estimate with the maximal infection rate (constant in the southern hemisphere).  On the right, we show arrival times for the case where $\omega_n>0$ in the northern hemisphere while $\omega_n<0$ in the southern hemisphere.  The predicted arrival time is the linear arrival time estimate (\ref{eq:linAT}) with $\alpha$ fixed to be the mean infection rate. }
    \label{fig:southnorth}
\end{figure}

\begin{rmk} It is known in the PDE context that inhomogeneities can lead to faster invasion speeds; see for example \cite{berestycki19,skt86}. In these cases the system typically exhibits pulsating traveling waves that propagate with some mean velocity that exceeds the velocity in the homogeneous case.  We emphasize that the mechanism at play in the PDE case is distinct than the one we discuss here.  
\end{rmk}

\section{Conclusion}\label{sec:discussion}

We have illustrated that the analogy between the dynamics of the meta-population model (\ref{eq:main}) and invasion fronts for spatially extended reaction-diffusion systems can be used to make qualitative predictions on the behavior of (\ref{eq:main}) in certain circumstances.  To recap, we show that arrival time estimates can be procured for a variation of (\ref{eq:main}) that includes an exposed population.  Second, from the PDE theory we expect that faster than linear invasion speeds should arise for some models where the nonlinearity enhances the growth of the instability.  {\color{black} Using a model motivated by recent work on the role of higher-order interactions in social epidemics we demonstrate that this also occurs in the case of the meta-population model (\ref{eq:main}).  } Using the smallness of the diffusion constant $\gamma$ and viewing the invasion front as a cascading process we are able to obtain corrections to the linear arrival times that provide more accurate predictions of arrival times.  Finally, we consider the effect of inhomogeneities on mean arrival times.  In the PDE case this can lead to faster arrival times.  We show that the same phenomena occurs in (\ref{eq:main}) although we argue that the mechanism leading to the decrease is distinct and due to the asymmetry between how local growth rates and random walk probabilities affect the arrival time calculation.  

{\color{black}
We conclude with comments on some directions for further research.  

Throughout this article we have assumed that the mobility parameter $\gamma$ is asymptotically small.  This assumption is valid in some situations, but it would be valuable to understand how arrival times are determined for larger values of $\gamma$.  This could be relevant when mobility is increased or when the infection rate is only slightly larger than the recovery rate so that the homogeneous growth and diffusion terms have similar scalings.  Numerical simulations of the SIR model (\ref{eq:main}) suggest that linear arrival times remain good estimates for nonlinear arrival times even for larger values of $\gamma$; see Figure~{\ref{fig:biggamma}}.  It would be interesting if it were possible to characterize which network features are relevant for this decreased arrival times.  

In terms of mathematical analysis, it would be interesting to establish rigorous upper bounds on nonlinear arrival times to complement the lower bounds afforded by the linearized equation in Theorem~\ref{thm:super}.  One possible avenue is to derive sub-solutions for (\ref{eq:main}).  We refer to \cite{fu16,wu17} for work in this direction for lattice SIR models.   We have used the term linearly determined informally to describe situations where the linearized arrival times are good estimates for the nonlinear arrival times.  A rigorous bound on nonlinear arrival times would serve to make this mathematically precise.  We point to recent work characterizing the location of solution level sets for the lattice Fisher-KPP equation as a starting point for this analysis; see \cite{besse22}.  

Several qualitative predictions for how network and system properties determine arrival times in meta-population models of global disease spread have been presented.  Ultimately, part of the motivation of the current study was to provide predictions that might be applied to more complicated and realistic models of disease spread.

\begin{figure}
    \centering
     \subfigure{\includegraphics[width=0.33\textwidth]{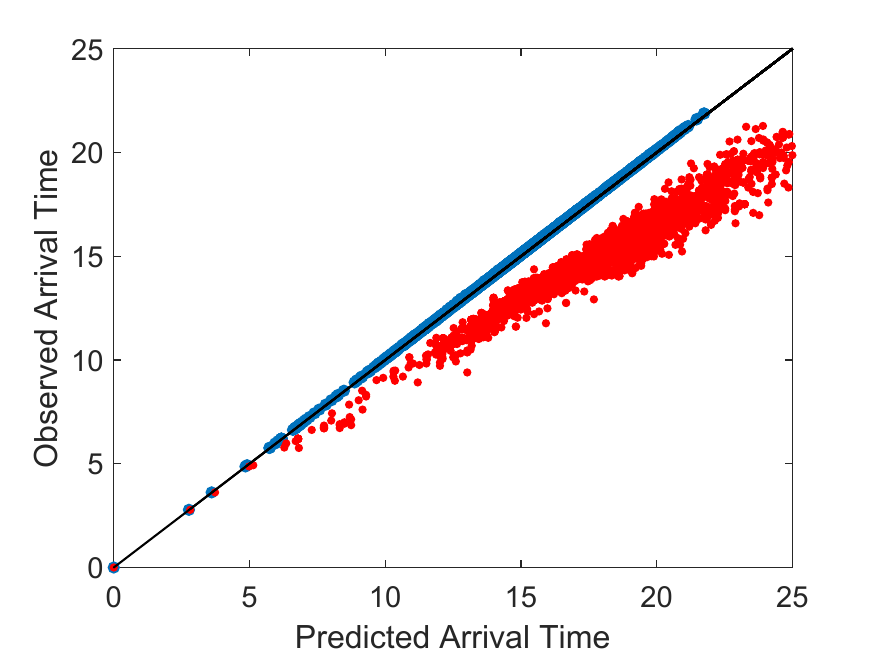}}
 \subfigure{\includegraphics[width=0.33\textwidth]{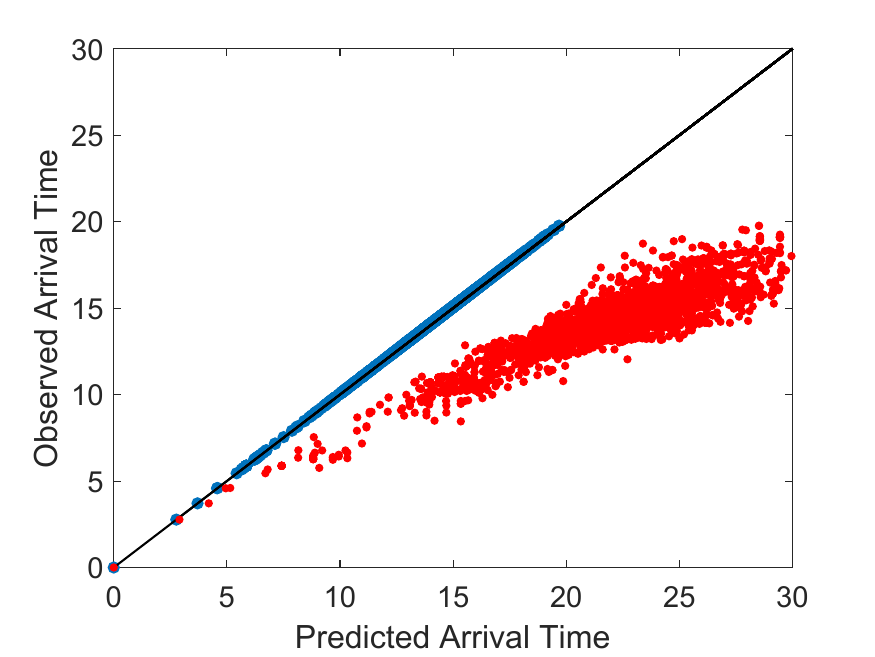}}
   \caption{Predicted versus observed arrival times for the SIR model (\ref{eq:main}) with infection rate $\alpha=1.0$, recovery rate $\beta=0.25$ and diffusion parameter $\gamma=0.3$ (left) and $\gamma=0.5$ (right).  The red predictions are those given by (\ref{eq:linATW}) which uses only the first term in the summation (\ref{eq:linAT}) while  the blue data are predictions computed by numerically solving (\ref{eq:linAT}) including the first twenty non-zero terms in the summation.  As expected, the one term approximation over-estimates the arrival times since it considers only contributions coming from the shortest path.   The correspondence between the linear prediction and nonlinear arrival times observed in numerical simulations suggests that (\ref{eq:main}) remains linearly determined even for large values of $\sigma$, although no closed form expression for arrival times is available.  }
    \label{fig:biggamma}
\end{figure} 

}
 \begin{appendix}
{\color{black} 
 \section{Singular perturbation analysis of the local model \ref{eq:SIRsimplicial}} \label{sec:appendix} 
 We consider (\ref{eq:SIRsimplicial}) with the goal of motivating the approximate solution presented in (\ref{eq:Ipushed}).  Our approach mimics the analysis of a model of an autocatalator chemical reaction model presented in \cite{gucwa09}.   We begin with the system (\ref{eq:T1}) where we wish to track the solution to the initial value problem with initial conditions $S(0)=1-\kappa$, $I(0)=\kappa$ in the limit as $\e=\frac{1}{\rho}\to 0$.  As mentioned in Section~\ref{sec:pushed} this system has two slow manifolds defined as curves of equilibrium when $\e$ is set equal to zero; see (\ref{eq:T1fast}).  The slow manifold on the $I$ axis is normally hyperbolic and it follows that the reduced flow on the slow manifold is, to leading order in $\e$ given by $I'=-\beta I$ and so we obtain that after some critical time $\Omega$ the solution of $I(t)$ can be described as in (\ref{eq:Ipushed}).  The second slow manifold is given by the $S$ axis, but this manifold lacks normal hyperbolicity so we are unable to track the solution of the initial value problem using linearization.  
 
 To overcome this lack of normal hyperbolicity we use geometric desingularization techniques or ``blow-up'' techniques to resolve the flow when $I$ is small.  Following \cite{gucwa09} we will change  coordinates to
 \[ S=\bar{S}, \ I=r \bar{I}, \ \e=r\bar{\e}, \ \bar{I}^2+\bar{\e}^2=1, \]
effectively transforming the $S$ axis to a cylinder with polar coordinates for the $I$ and $\e$ variables.  It is often easier to study the flow in coordinate charts and we employ two distinct charts.  The first is known as the re-scaling chart with coordinates
\[ S=S_1, \ I=r_1 I_1, \ \e=r_1, \]
while the second chart has coordinates 
\[ S=S_2, \ I=r_2, \ \e=r_2\e_2.\]
The two charts can be related via
\[ S_2=S_1, \ r_2=r_1I_1, \ \e_2=\frac{1}{I_1}. \]
Our goal is to track an initial condition with $S(0)=1-\kappa$, $I(0)=\kappa$ with $\kappa$ small as it evolves past the non-hyperbolic $S$ axis to the section $\Sigma_{out}=\{ (S,I) \ | \ I=\eta\}$ for some $\eta>0$ at which the solution can be effectively described by a fast transition to the $I$ axis followed by a slow relaxation along the $I$ axis until the solution converges to the origin.  In contrast to \cite{gucwa09}, our estimates here are approximate and not rigorous.  We believe that the estimates presented here could be made rigrorous, but we do not pursue such an analysis here.  
\paragraph{Analysis in first chart} 
The first chart is known as the rescaling chart where $r_1$ is simply a proxy for $\e$.  Converting (\ref{eq:T1}) to the coordinates of the first chart we find,
\begin{eqnarray}
\frac{dS_1}{d\tau}&=& -\alpha r_1^2 S_1 I_1-S_1r_1^2I_1^2 \nonumber \\
\frac{dI_1}{d\tau} &=& \alpha r_1S_1I_1 -\beta r_1 I_1+S_1r_1 I_1^2 \nonumber \\
\frac{dr_1}{d\tau}&=& 0 \label{eq:chartone} 
\end{eqnarray}
Rescaling the independent variable to divide the vector field by $r_1$ we find the de-singularized system 
\begin{eqnarray}
\frac{dS_1}{dt}&=& -\alpha r_1 S_1 I_1-S_1r_1I_1^2 \nonumber \\
\frac{dI_1}{dt} &=& \alpha S_1I_1 -\beta  I_1+S_1 I_1^2 \nonumber \\
\frac{dr_1}{dt}&=& 0. \label{eq:chartonedesing} 
\end{eqnarray}
Let $\eta>0$ and define the section $\Sigma_1=\{ (S_1,I_1,r_1) \ | \ I_1=\eta\} $.  Suppose that we start with initial conditions $I(0)=\kappa$ and $S(0)=1-\kappa$ which correspond to initial conditions $S_1(0)=1-\kappa$ and $I_1(0)=\frac{\kappa}{\e}$.  We therefore require $\kappa$ to scale smaller than $\e$ so that $I_1(0)$ is near zero.  To obtain a leading order description of the dynamics we set $r_1=0$ in (\ref{eq:chartonedesing}) and approximate $S_1(t)=1$.  Then $I_1$ obeys (to leading order in $\e$)
\[ \frac{dI_1}{dt} = (\alpha-\beta)I_1 + I_1^2, \quad I_1(t)=\frac{C(\alpha-\beta)e^{(\alpha-\beta)t}}{1-Ce^{(\alpha-\beta)t}}, \ C=\frac{\kappa}{\kappa+\e (\alpha-\beta)}. \]
Define $\Omega_1$ such that $I_1(\Omega_1)=\eta$.  Using the leading order description for $I_1(t)$ we estimate 
\[ \Omega_1\approx\frac{1}{\alpha-\beta} \log \left( \frac{\eta (\kappa+\e(\alpha-\beta))}{(\alpha-\beta+\eta)\kappa}\right) \]


We now convert our solution to the coordinates of the second chart and proceed with tracking the solution.  
\paragraph{Analysis in second chart} 
Converting (\ref{eq:T1}) to the coordinates of the second chart we find,
\begin{eqnarray}
\frac{dS_2}{d\tau}&=& -\alpha r_2^2\e_2 S_2-r_2^2S_2 \nonumber \\
\frac{dr_2}{d\tau} &=& \alpha r_2^2\e_2 S_2 -\beta r_2^2 \e_2+S_2r_2^2 \nonumber \\
\frac{d\e_2}{d\tau}&=& -\alpha r_2\e_2^2 S_2 +\beta r_2 \e_2^2-S_2r_2\e_2 \label{eq:charttwo} 
\end{eqnarray}
Rescaling the dependent variable to divide the vector field by the non-zero factor $\alpha r_2\e_2S_2-\beta r_2\e_2+S_2r_2$ we obtain the desingularized system
\begin{eqnarray}
\frac{dS_2}{ds}&=& -r_2\left(\frac{1}{1-\frac{\beta\e_2}{\alpha\e_2S_2 +S_2}}\right) \nonumber \\
\frac{dr_2}{ds} &=& r_2 \nonumber \\
\frac{d\e_2}{ds}&=& -\e_2. \label{eq:charttwo} 
\end{eqnarray}
Define $\Sigma_2=\{ (S_2,r_2,\e_2) \ | \ r_2=\eta\}$   with $\eta$ defined as before and recall the initial conditions in the section $\Sigma_1$ which correspond to $S_2(0)=1-\kappa+\O(\e)$, $r_2(0)=\eta\e $, $\e_2(0)=\frac{1}{\eta}$.  
The transition time between sections can then be evaluated explicitly, it terms of the transformed time-scale $s$, as $s=-\log \e$.  To determine estimates for the transition time $\tau_2$ in the $\tau$ time-scale we note that the timescales are related by the integral
\[ \tau_2=\int_0^{-\log(\e)} \frac{1}{\alpha r_2\e_2 S_2(\sigma)-\beta r_2\e_2 +S_2(\sigma)r_2(\sigma)} d\sigma \]
We will obtain an approximation to $t_2$ by setting $S_2(\sigma)=1$ in the integral.  We are then able to integrate (recalling that $r_2\e_2=\e$) and find
\begin{eqnarray*} \tau_2 &\approx& \frac{1}{\e(\alpha-\beta)}\left( -\log(\e)+\log\left(\frac{\e(\alpha-\beta)+\e\eta}{\e(\alpha-\beta)+\eta}\right)\right), \\
&\approx& \frac{1}{\e(\alpha-\beta)}\left( -\log(\e)+\log\left(\frac{\e}{\eta}\frac{(\alpha-\beta)+\eta}{1+\frac{\e}{\eta}(\alpha-\beta)}\right)\right)\\
&\approx& \frac{1}{\e(\alpha-\beta)}\left( -\log(\eta)+\log\left(\frac{\alpha-\beta+\eta}{1+\frac{\e}{\eta}(\alpha-\beta)}\right)\right)
\end{eqnarray*}
Re-scaling the independent variable from $\tau$ to $t$ we obtain an estimate on the total transit time of the initial condition $I(0)=\kappa\e$ to $I(t)=\eta$ as
\begin{eqnarray*} \Omega &\approx& \frac{1}{\alpha-\beta}\left( \log \left(\eta \frac{ (\kappa+\e(\alpha-\beta))}{(\alpha-\beta+\eta)\kappa}\right) -\log(\eta)+ \log\left(\frac{\alpha-\beta+\eta}{1+\frac{\e}{\eta}(\alpha-\beta)}\right)\right) \\
&\approx &  \frac{1}{\alpha-\beta}\left( \log \left(\frac{ (\kappa+\e(\alpha-\beta))}{\kappa}\right) +\log\left(\frac{1}{1+\frac{\e}{\eta}(\alpha-\beta)}\right)\right).
\end{eqnarray*}
Using $\frac{\kappa}{\e}$ small and $\e\ll 1$ we find the approximation in (\ref{eq:Omega}). }

 \end{appendix}

\section*{Acknowledgements} 
This project was conducted as part of a NSF sponsored REU program.  All participants received support from the NSF (DMS-2007759).  The authors thank the anonymous referees whose comments improved the paper.

\section*{Data Availability}  The datasets generated during and/or analysed during the current study are available from the corresponding author on reasonable request.

\bibliographystyle{abbrv}
\bibliography{REUMaster}

\end{document}